\documentclass[a4paper]{article}

\usepackage[english]{babel}
\usepackage[utf8x]{inputenc}
\usepackage[T1]{fontenc}
\usepackage{physics}

\usepackage[a4paper,top=3cm,bottom=2cm,left=3cm,right=3cm,marginparwidth=1.75cm]{geometry}

\usepackage{amsmath}
\usepackage{amssymb}
\usepackage{amsthm}
\usepackage{graphicx}
\usepackage{subcaption}
\usepackage{comment}
\usepackage[numbers]{natbib}
\usepackage[colorinlistoftodos]{todonotes}
\usepackage[colorlinks=true, allcolors=blue]{hyperref}
\newtheorem{theorem}{Theorem}

\newtheorem{proposition}{Proposition}
\newtheorem{lemma}{Lemma}
\newtheorem*{theorem*}{Theorem}
\newtheorem{definition}{Definition}
\usepackage[affil-it,noblocks]{authblk}


\newcommand*\Span{\text{span}}

\title{No quantum Ramsey theorem for stabilizer codes}
\author[1]{Yannis Bousba}
\author[2]{Travis B. Russell}
\affil[1]{Les \'{e}coles de Saint-Cyr Co\"{e}tquidan, Guer, France}
\affil[2]{United States Military Academy, West Point, NY USA}
\date{April, 2020 \\ Revised August, 2020}

\begin{document}
\maketitle

\begin{abstract} Recently, Nik Weaver proved a quantum analogue of the Ramsey theorem. Weaver's theorem states that for every positive integer $k$, there exists a positive integer $n_k$ such for any quantum channel on the $n_k \times n_k$ matrices, the corresponding quantum graph possesses either a $k$-dimensional quantum clique or a $k$-dimensional quantum anti-clique. Quantum anti-cliques coincide with error-correcting codes, while quantum cliques satisfy a dual property. In this paper we study the quantum graphs of mixed-unitary channels generated by tensor products of Pauli operators, which we call Pauli channels. We show that most quantum graphs arising from Pauli channels have non-trivial quantum cliques or quantum anti-cliques which are stabilizer codes. However, a reformulation of the quantum Ramsey theorem in terms of stabilizer codes and Pauli channels fails. Specifically, for every positive integer $n$, there exists an $n$-qubit Pauli channel for which any non-trivial quantum clique or quantum anti-clique fails to be a stabilizer code. We also show that this example is essentially unique, and hence most $n$-qubit Pauli channels have non-trivial quantum cliques or quantum anti-cliques which are stabilizer codes. \end{abstract}

\section{Introduction} The classical Ramsey theorems are a famous class of results originating from \cite{Ramsey1930} which demonstrate the phenomenon of discovering unexpected order in large, potentially chaotic, sets. One example of a Ramsey theorem can be phrased as follows: for every positive integer $k$, there exists a positive integer $n_k$ such that if $n_k$ people are found in a room, there exists a subset of $k$ people in the room such that either every member of the subset is acquainted with every other member, or no member of the subset is acquainted with any other member of the subset. In the first scenario, the subset is referred to as a $k$-clique, whereas in the second scenario, the subset is referred to as a $k$-anti-clique. In practice, $n_k$ is often much larger than $k$, and finding optimal bounds on the size of $n_k$ remains an active area of research \cite{conlon_fox_sudakov_2015}.

The classical Ramsey theorems have important corollaries in information theory and cryptography. Suppose a classical (probabilistic) channel is employed to encode a $n_k$-letter alphabet. The Ramsey theorem implies that there exists a $k$-letter subset of the alphabet for which either any message can be perfectly decoded with no probability of error or there is no guarantee that the even a subset of the message can be accurately decoded. These results are based on the equivalence of information theoretic properties of the channel with the combinatorial properties of the channel's confusability graph.

In \cite{DuanSeveriniWinter_2013}, Duan-Severini-Winter showed that the correspondence between classical channels and confusability graphs can be adapted to the setting of quantum channels. To this end, they show that there exists a correspondence between quantum channels and matrix systems - unital Hermitian subspaces of matrix algebras. Thus, they define a quantum graph to be a matrix system. They go on to prove that many properties of the quantum channel can be detected by studying the corresponding quantum graph.

In the paper \cite{Weaver2017}, Nik Weaver defines the notions of quantum cliques and quantum anti-cliques for a given quantum graph. Building upon his previous research in the theory of operator systems, he was able to prove a quantum analogue of the classical Ramsey theorem: namely that for each positive integer $k$ there exists an integer $n_k$ such that every matrix subsystem of the $n_k \times n_k$ matrices contains either a quantum $k$-clique or a quantum $k$-anti-clique.

In this paper we explore the possibility of a quantum Ramsey theorem for stabilizer codes. Stabilizer codes are examples of quantum codes which arise naturally in quantum computation. They are subspaces of the $n$-qubit Hilbert space whose error-correcting properties have been studied extensively (\cite{Shor95}, \cite{Gottesman96}, \cite{NielsenChuangTextbook}) as they are candidates for error-correcting codes that will be necessary to account for the noise which will arise in any physical implementation of the circuit model of quantum computation. 

We will show that, even under generous circumstances, a Ramsey theorem for stabilizer codes fails dramatically. Specifically, we study the quantum graphs corresponding to mixed unitary channels generated by unitaries taken from the Pauli group, which we call Pauli channels. We show that most quantum graphs of this form have non-trivial quantum cliques or quantum anti-cliques which are stabilizer codes. However, we also show that one can find, for any integer $n$, a quantum graph for a Pauli channel on the $2^n \times 2^n$ matrices with the property that any non-trivial quantum clique or quantum anti-clique fails to be a stabilizer code. We show, however, that this example is essentially unique.

Finally, we should emphasize that the notion of error-correction discussed throughout this paper is exact error-correction - i.e. correction of all errors. Thus, our negative results concerning the perfect error-correction properties of stabilizer codes and Pauli channels do not contradict the well-established asymptotic error-correction properties of stabilizer codes. Furthermore, our results say nothing about non-stabilizer codes in relation to Pauli channels, and hence do not contradict the main result of Weaver \cite{Weaver2017}.

Our paper is organized as follows. In section 2 we provide the basic definitions and notations used throughout the paper and recall any important results we will need from information theory, quantum theory and the literature on stabilizer codes. In section 3 we provide all results and their proofs.

\section{Preliminaries} 

Throughout this paper we will employ the following notation. We will use $\mathbb{N}$ to denote the set of positive integers. We let $\mathbb{R}$ denote the field of real numbers, we let $\mathbb{C}$ denote the field of complex numbers, and we let $\mathbb{F}_2$ denote the binary field $\{0,1\}$. Given a field $\mathbb{F}$ and an integer $n \in \mathbb{N}$, we let $\mathbb{F}^n$ denote the vector space of $n$-tuples with entries in $\mathbb{F}$. We let $M_n$ denote the set of $n \times n$ matrices with entries in $\mathbb{C}$, $M_{n,m}$ denote the $n \times m$ matrices, and we let $Tr(\cdot)$ denote the canonical trace function on $M_n$ defined by $Tr(A) = \sum_{i=1}^n a_{i,i}$ where $(a_{i,j})$ is the $(i,j)$ entry of the matrix for $A$ with respect to the canonical basis of $M_n$. 

Throughout we will freely use basic results from linear algebra and quantum theory. We refer the reader to \cite{NielsenChuangTextbook} for a good introduction the quantum theory. We will employ standard notation from quantum theory, except that we sometimes write $(h,k)$ for the inner product of vectors $h,k \in H$ for a given Hilbert space $H$. We will consider only finite-dimensional Hilbert spaces, and write $B(H)$ for the set of linear operators on a Hilbert space. We freely identify $B(H)$ with $M_n$ where $n = \dim(H)$.

\subsection{Classical channels and Ramsey's Theorem}

We begin by recalling the classical Ramsey Theorem from graph theory and describing its connection to information theory. 

\begin{definition}
Let $G$ be a simple undirected graph with vertex set $V$ and edge set $E$. A subset $C \subset V$ is called a \textbf{clique} if the subgraph generated by $C$ is complete - i.e., every pair of distinct vertices is connected by an edge. A subset $C$ of $V$ is called an \textbf{anti-clique} if the subgraph generated by $C$ is disconnected - i.e. no vertices of $C$ are connected by an edge. When $|C|=k$, we may refer to a $C$ as a $k$-clique if it is a clique or a $k$-anti-clique if it is an anti-clique.
\end{definition}

We will only consider the following simplified form of the Ramsey Theorem.

\begin{theorem}\label{thm: classical Ramsey theorem}[Ramsey \cite{Ramsey1930}]
For every $k \in \mathbb{N}$ there exists $n_k \in \mathbb{N}$ such that every undirected graph with $n_k$ vertices contains a $k$-clique or a $k$-anti-clique.
\end{theorem}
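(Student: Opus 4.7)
The plan is to prove the stronger off-diagonal statement: for every pair $s,t \in \mathbb{N}$ there exists $R(s,t) \in \mathbb{N}$ such that any graph on $R(s,t)$ vertices contains an $s$-clique or a $t$-anticlique. Once this is established, the theorem follows by setting $N_k = R(k,k)$. Working with $R(s,t)$ rather than $N_k$ directly allows a clean double induction and is the standard route.

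I would proceed by induction on $s+t$. For the base cases, observe that $R(1,t) = R(s,1) = 1$, since a single vertex is simultaneously a $1$-clique and a $1$-anticlique, and that $R(2,t) = t$ and $R(s,2) = s$, since a graph on $t$ vertices is either an edgeless graph (a $t$-anticlique) or contains at least one edge (a $2$-clique). These handle the small cases and let the induction begin.

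For the inductive step I would prove the key recurrence
\begin{equation*}
R(s,t) \leq R(s-1,t) + R(s,t-1).
\end{equation*}
Fix a graph $G$ on $R(s-1,t) + R(s,t-1)$ vertices and pick any vertex $v$. Partition the remaining vertices into the set $A$ of neighbors of $v$ and the set $B$ of non-neighbors. Since $|A| + |B| = R(s-1,t) + R(s,t-1) - 1$, pigeonhole forces $|A| \geq R(s-1,t)$ or $|B| \geq R(s,t-1)$. In the first case, the induction hypothesis yields within $A$ either an $(s-1)$-clique, which combined with $v$ gives an $s$-clique in $G$, or a $t$-anticlique already. In the second case, $B$ contains either an $s$-clique or a $(t-1)$-anticlique, the latter of which extends with $v$ to a $t$-anticlique. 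This closes the induction and shows $R(s,t)$ is finite for all $s,t$.

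The argument is entirely elementary and the only genuinely substantive step is the pigeonhole/neighborhood split, so I do not expect a significant obstacle. The main subtlety to keep track of is the asymmetry between cliques and anticliques in the recurrence: one must be careful that augmenting by $v$ requires the neighbors (resp.\ non-neighbors) set when extending a clique (resp.\ anticlique). Finally, setting $N_k := R(k,k)$ and noting that a $k$-clique or $k$-anticlique is in particular what the theorem asks for finishes the proof.
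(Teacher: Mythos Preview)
Your argument is correct: this is the standard Erd\H{o}s--Szekeres double induction on $s+t$ via the recurrence $R(s,t) \leq R(s-1,t) + R(s,t-1)$, and the neighborhood split plus pigeonhole go through exactly as you describe. Note, however, that the paper does not actually prove this theorem; it is stated with a citation to Ramsey's original paper and used as background, so there is no in-paper proof to compare against.
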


Classical channels are stochastic functions which which map letters in one alphabet to another. Let $n,m \in \mathbb{N}$ and suppose that $X=\{x_1, x_2, \dots, x_n\}$ is the input alphabet and $Y=\{y_1, y_2, \dots, y_m\}$ is the output alphabet. Then a \textbf{channel} $N: X \rightarrow Y$ is defined to be a matrix $(N(y_i|x_j))_{i \leq m, j \leq n}$ of positive real numbers satisfying the property that $\sum_{i=1}^m N(y_i|x_j) = 1$ for each $j \leq n$. The quantity $N(y_i|x_j)$ represents the probability that the symbol $x_j$ is encoded by the channel as $y_i$.

Given a channel $N$, we define the \textbf{confusability graph} of the channel to be the graph $G_{N}$ with vertex set $X$ and edge set defined by the relation $x_i \sim x_j$ if and only if $N(y_k|x_i)N(y_k|x_j) > 0$ for some $k \leq m$. Thus two vertices are connected by an edge if and only if it is possible that the channel maps both symbols to the same letter and hence confuses the letters.

By a \textbf{code}, we mean a non-empty subset $C$ of the input alphabet $X$. A code $C$ is called an \textbf{error-correcting code} if it is possible to determine from the output of $N$ which letters from $C$ were transmitted, provided it is known that only letters from $C$ were used as input. It is evident that $C$ is an error-correcting code for $N$ if and only if it impossible for two distinct letters from the code to be mapped to the same output. It follows that $C$ is an error-correcting code if and only if the corresponding set of vertices $C$ in the vertex set of $G_{N}$ constitutes an anti-clique. Conversely, we call a code $C$ a \textbf{private code} if it is not possible to distinguish any pair of letters from $C$ after application of the code, even if it is known that only letters from $C$ were used as input. Hence $C$ is a private code if for every pair $x_i,x_j \in C$ there exists a symbol $y_k \in Y$ such that $N(y_k|x_i)N(y_k|x_j) > 0$. Equivalently, a code $C$ is a private code if and only if the corresponding set of vertices $C$ in the vertex set of $G_N$ constitutes a clique.

The above discussion, together with Theorem \ref{thm: classical Ramsey theorem}, imply the following: for each $k \in \mathbb{N}$, if $N:X \rightarrow Y$ is a classical channel such that $|X| \geq n_k$, then there exists a code $C \subset X$ such that $|C|=k$ and $C$ is either an error-correcting code or a private code.

\subsection{Quantum channels and Weaver's Theorem}

In the vector state picture of quantum mechanics, a physical system is modeled by a Hilbert space $H$. The state of a quantum system is given by a unit vector $\ket{\phi} \in H$. Two states are considered equivalent if they differ by a phase - i.e. $\ket{\phi}$ is equivalent to $\ket{\psi}$ if $\ket{\phi} = e^{i\theta} \ket{\psi}$. In a closed system, the state evolves over time via unitary evolution $\ket{\phi} \mapsto U \ket{\phi}$ where $U$ is a unitary which does not depend on the state $\ket{\phi}$. In an open system, the evolution becomes stochastic in nature. Evolution of an open quantum system is modeled by a family of operators $\{E_1, E_2, \dots, E_m \in L(H)\}$ satisfying the completeness relation \[ \sum E_i^\dagger E_i = I. \] Then the evolution is modeled by $\ket{\psi} \mapsto t_i E_i \ket{\psi}$ with probability $\bra{\psi} E_i^\dagger E_i \ket{\psi}$ (where $t_i$ is a normalization constant). We refer to this stochastic mapping as a \textbf{quantum channel} (in the vector state picture).

Because of the stochastic nature of state evolution in an open quantum system, it is helpful to adopt a different notion of quantum state. Since two states are equivalent up to phase, we could consider the rank one projection $\ket{\psi}\bra{\psi}$ instead of the vector $\ket{\psi}$, eliminating the need to worry about phase. After evolution under a quantum channel, the state of the system could be any of $\{ \ket{\psi_1}\bra{\psi_1}, \ket{\psi_2}\bra{\psi_2}, \dots, \ket{\psi_m}\bra{\psi_m}\}$ with probability $\{p_1, p_2, \dots, p_m\}$, respectively, where each $p_i$ is positive and $\sum p_i = 1$. These constraints uniquely define a density operator $\rho = \sum p_i \ket{\psi_i} \bra{\psi_i}$. Conversely, any density operator $\rho$ can be decomposed in this form (although not uniquely). Thus, we may regard the state of the quantum system modeled by $H$ to be a density operator in $B(H)$. This leads to the following redefinition of quantum channel for density operators.

\begin{definition}\label{defn: quantum channel}[Quantum Channel]
Let $n,k \in \mathbb{N}$. A linear map $\mathcal{E}: M_n \rightarrow M_k$ is called a \textbf{quantum channel} if there exists operators $E_1, E_2, \dots, E_m \in M_{k,n}$ such that $\sum E_j^\dagger E_j = I_n$ and \[ \mathcal{E}(x) = \sum_{j=1}^m E_j x E_j^\dagger. \] The operators $\{E_1, E_2, \dots, E_m\}$ are called \textbf{noise operators} for the channel $\mathcal{E}$.
\end{definition}

Quantum channels can be equivalently defined as completely positive trace-preserving linear maps. These conditions ensure that if the input of the quantum channel is a density operator, then its output is also a density operator, and this property is stable under the tensor product operation.

Suppose $\mathcal{E}: M_n \to M_k$ is a quantum channel with noise operators $\{E_1, E_2, \dots, E_m\}$. Then Duan-Severini-Winter \cite{DuanSeveriniWinter_2013} define the \textbf{quantum graph} of the quantum channel $\mathcal{E}$ to be the vector space $G_{\mathcal{E}} := \Span \{E_i^\dagger E_j \} \subseteq M_n$. The quantum graph contains the identity operator since $\sum_j E_j^\dagger E_j = I_n$. It is also closed under the adjoint operation, since $(E_i^\dagger E_j)^\dagger = E_j^\dagger E_i$. Hence it is an example of a \textbf{matrix system}\footnote{In particular it is an operator system - a unital $\dagger$-closed subspace of $B(H)$ for some Hilbert space $H$.} - i.e. a unital $\dagger$-closed linear subspace of $M_n$.

To understand how the non-commutative graph of a quantum channel relates to the confusability graph of a classical channel, we need to introduce the notion of a quantum code. A \textbf{quantum code} is a linear subspace $C$ of a Hilbert space $H$. Let $P(C)$ be the unique orthogonal projection whose range is $C$. Given a matrix system $G \subseteq M_n$, a $k$-dimensional subspace $C \subset \mathbb{C}^n$ is called a \textbf{quantum clique} if $\dim(P(C)GP(C))=k^2$. It is called a \textbf{quantum anti-clique} if $\dim(P(C)GP(C)) = 1$. These properties ensure that the dimension of the vector space $P(C) G P(C)$ is maximal for a clique and minimal for an anti-clique.

Given a quantum code $C$, we say that a linear operator $x$ is \textbf{supported on $C$} if $x = PxP$ for $P=P(C)$. A quantum code $C \subseteq H$ is called a \textbf{quantum error-correcting code} for a quantum channel $\mathcal{E}: M_n \rightarrow M_k$ if there exists a quantum channel $\mathcal{F}: M_k \rightarrow M_n$ such that $\mathcal{F}(\mathcal{E}(x)) = x$ for all $x$ supported on $C$. The following theorem characterizes quantum error-correcting codes as quantum anti-cliques.

\begin{theorem}\label{thm: Knill-Laflamme Theorem}[Knill-Laflamme, \cite{KnillLaFlamme2000}]
Let $H$ be an $n$-dimensional Hilbert space, and let $C \subseteq H$ be a quantum code with $P=P(C)$. Then $C$ is a quantum error-correcting code for a quantum channel $\mathcal{E}$ if and only if $C$ is a quantum anti-clique for $G_{\mathcal{E}}$.
\end{theorem}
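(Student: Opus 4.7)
The plan is to reduce this statement to the classical Knill--Laflamme correctability criterion, which asserts that $C$ is error-correcting for a channel with noise operators $\{E_1,\ldots,E_m\}$ if and only if there exist scalars $\lambda_{ij}\in\mathbb{C}$ such that $PE_i^\dagger E_j P = \lambda_{ij}P$ for all $i,j$, where $P=P(C)$. I will invoke this as a black box from \cite{KnillLaFlamme2000} (or \cite{NielsenChuangTextbook}); proving it from scratch would require constructing a recovery channel from a polar/SVD decomposition of the compressed noise operators, which is exactly what that reference already does. Given this criterion, the remaining task is to show it is equivalent to the dimension condition $\dim(PG_\phi P)=1$.

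The key preliminary observation is that $P\in PG_\phi P$, and in particular $\dim(PG_\phi P)\geq 1$. This is because the completeness relation $\sum_j E_j^\dagger E_j = I_n$ places $I_n\in G_\phi$, hence $P = PI_nP\in PG_\phi P$, and $P\neq 0$ since $C$ is a nontrivial subspace.

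For the forward implication, suppose $C$ is error-correcting. By the Knill--Laflamme criterion, every generator $PE_i^\dagger E_j P$ of $PG_\phi P$ equals $\lambda_{ij}P$ for some scalar, so $PG_\phi P\subseteq \mathbb{C}P$. Combined with the observation above, this forces $PG_\phi P=\mathbb{C}P$, so $\dim(PG_\phi P)=1$ and $C$ is a quantum anticlique for $G_\phi$. For the converse, suppose $\dim(PG_\phi P)=1$. Since $P\in PG_\phi P$ and $P\neq 0$, we conclude $PG_\phi P = \mathbb{C}P$. In particular each compressed product $PE_i^\dagger E_j P$ lies in $\mathbb{C}P$, producing scalars $\lambda_{ij}$ satisfying the Knill--Laflamme condition; hence $C$ is an error-correcting code for $\phi$.

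There is no serious obstacle once one accepts the cited Knill--Laflamme criterion: the only conceptual point is that the ``$\dim(PGP)=1$'' condition is a coordinate-free reformulation of Knill--Laflamme, expressing that the compression of the matrix system $G_\phi$ to the code $C$ collapses to the scalar multiples of $P_C$. The only subtlety is remembering to use $I_n\in G_\phi$ to ensure $P$ is genuinely an element of $PG_\phi P$, rather than merely allowing $PG_\phi P$ to be contained in $\mathbb{C}P$.
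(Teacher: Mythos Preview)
Your argument is correct. You invoke the standard Knill--Laflamme criterion $PE_i^\dagger E_jP=\lambda_{ij}P$ as a black box and then observe that, since $I\in G_\phi$ forces $P\in PG_\phi P$, this scalar-multiple condition is exactly the statement $PG_\phi P=\mathbb{C}P$, i.e.\ $\dim(PG_\phi P)=1$. Both directions are clean and complete.

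The paper does not give a formal proof of this theorem: it states it with attribution to \cite{KnillLaFlamme2000} and follows with an informal remark that one method is to characterize correctability by the orthogonality condition $(E_i\ket{h},E_j\ket{k})=0$ for all orthogonal $\ket{h},\ket{k}\in C$, and then to recover the input by distinguishing the orthogonal subspaces $\Span_i\{E_i\ket{h}\}$ and $\Span_i\{E_i\ket{k}\}$. That sketch emphasizes only the off-diagonal part of the Knill--Laflamme condition (it omits the requirement that $\bra{h}E_i^\dagger E_j\ket{h}$ be independent of the unit vector $\ket{h}\in C$), so it is really a heuristic rather than a proof. Your route via the full scalar criterion $PE_i^\dagger E_jP=\lambda_{ij}P$ is the standard and more complete way to make the equivalence with the anticlique definition precise; the paper's remark trades that precision for physical intuition about why recovery is possible.
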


One method of proving Theorem \ref{thm: Knill-Laflamme Theorem} is to show that if $\mathcal{E}(x) = \sum E_j x E_j^\dagger$ then $C$ is error-correcting if and only if for every pair $\ket{h},\ket{k} \in C$ of orthogonal vectors and for every $i,j$ we have $( E_i \ket{h}, E_j \ket{k} )=0$. When this occurs, the vector $\ket{h}$ is mapped by the channel to a vector in $\Span_i \{ E_i \ket{h} \}$, while the vector $\ket{k}$ is mapped by the channel to the orthogonal space $\Span_i \{E_i \ket{k}\}$. Since quantum operations can distinguish between orthogonal vectors, it is possible to recover the originally transmitted states $\ket{h}$ and $\ket{k}$ up to phase.

We conclude this section by considering a notion of quantum private codes. Suppose $C \subseteq H$ is a quantum code and that $\mathcal{E}$ is a quantum channel with noise operators $\{E_1, E_2, \dots E_m\}$. Then we call $C$ a \textbf{quantum private code}\footnote{Our notion of quantum private code is more general than the one considered in \cite{KribsPloscker14}} for $\mathcal{E}$ if for every pair $\ket{a}, \ket{b} \in C$ of orthogonal vectors there exist $i,k \leq m$ such that $(E_i \ket{a}, E_j \ket{b}) \neq 0$. Hence, it is not possible to distinguish $\ket{a}$ from $\ket{b}$ after application of the channel $\mathcal{E}$ with certainty. The connection between quantum private codes and quantum cliques is illustrated in the following theorem, which to our knowledge does not appear in the literature.

\begin{theorem}
Let $H$ be an $n$-dimensional Hilbert space, and let $C \subseteq H$ be a quantum code with $P = P(C)$. Then $C$ is a quantum private code for $\mathcal{E}$ whenever $C$ is a quantum clique for $G_{\mathcal{E}}$.
\end{theorem}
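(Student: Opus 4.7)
The plan is to rewrite both hypotheses as statements about bilinear forms and then leverage the clique dimension condition to produce the desired operator.

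First, I would translate the private code condition into the language of $G_\phi$. Since $(E_i|a\rangle, E_j|b\rangle) = \langle a | E_i^\dagger E_j | b \rangle$, and $G_\phi = \Span\{E_i^\dagger E_j\}$, the existence of $i,j$ with $(E_i|a\rangle, E_j|b\rangle)\neq 0$ is equivalent to the existence of some $x \in G_\phi$ with $\langle a | x | b \rangle \neq 0$. Thus the statement to prove becomes: if $\dim(P G_\phi P) = k^2$, then for every orthogonal pair $|a\rangle, |b\rangle \in C$ there exists $x \in G_\phi$ with $\langle a | x | b \rangle \neq 0$.

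Next, I would exploit the clique hypothesis. The subspace $P M_n P$ of operators supported on $C$ has dimension exactly $k^2$, where $k = \dim C$. Since $P G_\phi P \subseteq P M_n P$ and $\dim(P G_\phi P) = k^2$, we conclude
\[ P G_\phi P = P M_n P. \]
Now for orthogonal unit vectors $|a\rangle, |b\rangle \in C$, the rank-one operator $|a\rangle\langle b|$ lies in $P M_n P$, so there exists $x \in G_\phi$ with $PxP = |a\rangle\langle b|$. Because $P|a\rangle = |a\rangle$ and $P|b\rangle = |b\rangle$, we have $\langle a | x | b \rangle = \langle a | PxP | b \rangle = 1 \neq 0$, which by the reformulation in the previous paragraph completes the proof.

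I expect no serious obstacle; the argument is essentially a dimension count. The only subtlety is recognizing that $\dim(P M_n P) = k^2$ so that the clique condition forces the saturation $P G_\phi P = P M_n P$, and noticing that the hypothesis of the private code need only be checked pairwise on orthogonal vectors, which is exactly what expanding $x$ as a linear combination of the $E_i^\dagger E_j$ delivers. I would also briefly remark that the converse implication is not claimed here, consistent with the asymmetry between the definitions of quantum clique (maximal dimension) and quantum private code (mere non-orthogonality), which is why the theorem is stated as a one-sided implication.
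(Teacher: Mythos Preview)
Your proposal is correct and follows essentially the same approach as the paper: both arguments use the dimension count $\dim(PM_nP)=k^2$ to conclude $PG_\phi P = PM_nP$, then apply this to the rank-one operator $\ket{a}\bra{b}$ and expand in the spanning set $\{E_i^\dagger E_j\}$ to find a pair $(i,j)$ with $(E_i\ket{a},E_j\ket{b})\neq 0$. The only cosmetic difference is that the paper phrases the last step as a contradiction from assuming all such inner products vanish, whereas you reformulate the private-code condition in terms of $G_\phi$ at the outset; the content is identical.
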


\begin{proof}
We will regard $\mathcal{E}$ is a quantum channel in the state picture. Assume $\mathcal{E}$ has noise operators $\{E_1, E_2, \dots, E_m\}$ and consider $G_{\mathcal{E}} = \Span \{E_i^\dagger E_j\} \subseteq M_n$ and suppose that $C$ is a quantum clique for $G_{\mathcal{E}}$. Then $\dim(PG_{\mathcal{E}}P) = k^2$. Since $\dim(P M_n P) = k^2$, we see that for any operator $T \in M_n$ we have $PTP \in P G_{\mathcal{E}} P$. Now assume $\ket{a}, \ket{b} \in C$ and $\braket{a}{b}=0$. Let $T = \ket{a}\bra{b}$. Then since $PTP \in P G_{\mathcal{E}} P$, there exists a scalar matrix $(c_{i,j})_{i,j \leq m}$ such that $PTP = P( \sum c_{i,j} E_i^\dagger E_j) P$, and this matrix is necessarily non-zero. Now notice that \begin{eqnarray} (E_i \ket{a}, E_j \ket{b}) & = & \bra{a} E_j^\dagger E_i \ket{b} \nonumber \\ & = & \bra{a} P E_j^\dagger E_i P \ket{b}. \nonumber \end{eqnarray} If $(E_i \ket{a}, E_j \ket{b}) = 0$ for all $i,j$ it would follow that \[ 0 = \sum c_{i,j} (E_i \ket{a}, E_j \ket{b}) = \bra{a} P (\sum c_{i,j} E_i^\dagger E_j) P \ket{b} = \bra{a} T \ket{b} = 1, \] a contradiction. So it must be that $(E_i \ket{a}, E_j \ket{b}) \neq 0$ for some $i,j$.
\end{proof}

\subsection{Stabilizer codes and error correction}

Stabilizer codes are an important family of error-correcting codes in quantum computing. They are useful for analyzing separable quantum channels on $n$-qubit systems which model noise in a quantum circuit. The earliest example of a stabilizer code was discovered by Shor \cite{Shor95} and the general theory was developed by Gottesman \cite{Gottesman96}. See chapter 10 of \cite{NielsenChuangTextbook} for an excellent survey of this topic, including the results described below.

We first recall the $n$-qubit Pauli group $P_n$. With respect to the canonical basis of $\mathbb{C}^2$, we define matrices \[ X = \begin{pmatrix} 0 & 1 \\ 1 & 0 \end{pmatrix}, \quad Y = \begin{pmatrix} 0 & -i \\ i & 0 \end{pmatrix}, \quad Z = \begin{pmatrix} 1 & 0 \\ 0 & -1 \end{pmatrix}. \] These are the well-known Pauli matrices. We define the $n$-qubit Pauli group to be the finite group of matrices \[ P_n := \{ i^k \sigma_1 \otimes \sigma_2 \otimes \dots \otimes \sigma_n : k \in \{0,1,2,3\}, \sigma_l \in \{I,X,Y,Z\} \text{ for each } l \leq n \}. \]

\begin{definition}[Stabilizer]
A \textbf{stabilizer group} is a commutative subgroup $S \subset P_n$ (for some $n \in \mathbb{N}$) such that $-I \notin S$. A quantum code $C \subset \mathbb{C}^{2^n}$ is a \textbf{stabilizer code} if there exists a stabilizer group $S \subseteq P_n$ such that \[ C = \{ \ket{\phi} \in \mathbb{C}^{2^n} : g \ket{\phi} = \ket{\phi} \text{ for all } g \in S \}. \] In this case we say that $S$ is a \textbf{stabilizer} for $C$.
\end{definition}

The following useful properties of stabilizer codes are well-known.

\begin{proposition}[Properties of stabilizer codes] \label{prop: Properties of stabilizer codes}
Let $C \subset \mathbb{C}^{2^n}$ be a stabilizer code with stabilizer $S \subset P_n$. Then there exists $k \leq n$ such that $S$ is generated by independent elements $g_1, g_2, \dots, g_{n-k}$. In this case, $|S| = 2^{n-k}$ and $\dim(C) = 2^k$. Moreover, the orthogonal projection $P$ onto $C$ can be expressed as \[ P = \frac{1}{2^{n-k}} \prod_{i=1}^{n-k}(I + g_i) = \frac{1}{2^{n-k}}\sum_{g \in S} g. \]
\end{proposition}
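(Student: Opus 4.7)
My plan is to first pin down the algebraic structure of $S$ by showing every element is an involution, then build $P$ as an explicit product, verify it is the orthogonal projection onto $C$, and finally compute $\dim(C)$ via a trace calculation. For the first step, I would start with an arbitrary $g = i^a \sigma_1 \otimes \cdots \otimes \sigma_n \in S$ with $a \in \{0,1,2,3\}$ and $\sigma_j \in \{I,X,Y,Z\}$, and compute $g^2 = (-1)^a I$ using $\sigma_j^2 = I$ componentwise. Since $S$ is a subgroup we have $g^2 \in S$, and since $-I \notin S$ the exponent $a$ must be even; hence $g^2 = I$. Thus $S$ is an elementary abelian $2$-group. Choosing a minimal generating set $g_1, \dots, g_m$ (equivalently, a basis of $S$ viewed as an $\mathbb{F}_2$-vector space) gives $|S| = 2^m$, and each $g_i$ is self-adjoint because it is a unitary involution.

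Next I would produce the projection explicitly by setting $P := \frac{1}{2^m} \prod_{i=1}^m (I + g_i)$. Expanding the product and using commutativity together with $g_i^2 = I$ gives
\[
P = \frac{1}{2^m} \sum_{T \subseteq \{1,\dots,m\}} \prod_{i \in T} g_i,
\]
and by independence of the generators the map $T \mapsto \prod_{i \in T} g_i$ is a bijection from subsets of $\{1,\dots,m\}$ onto $S$. Hence $P = \frac{1}{2^m} \sum_{g \in S} g$, which matches the two expressions in the statement. Self-adjointness $P^\dagger = P$ follows from the self-adjointness of every $g \in S$. Idempotency is a group rearrangement: using $gS = S$ for each $g$,
\[
P^2 = \frac{1}{2^{2m}} \sum_{g,h \in S} gh = \frac{1}{2^{2m}} \cdot 2^m \cdot \sum_{h \in S} h = P.
\]

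Then I would identify $\mathrm{range}(P)$ with $C$. For any $h \in S$ the equality $hS = S$ gives $hP = P$, so $h(Pv) = Pv$ for every vector $v$, yielding $\mathrm{range}(P) \subseteq C$. Conversely, if $v \in C$ then $gv = v$ for every $g \in S$, so $Pv = v$. Hence $P$ is the orthogonal projection onto $C$, and $\dim(C) = \mathrm{Tr}(P) = \frac{1}{2^m} \sum_{g \in S} \mathrm{Tr}(g)$. The identity term contributes $\mathrm{Tr}(I) = 2^n$; every other $g \in S$ has the form $\pm \sigma_1 \otimes \cdots \otimes \sigma_n$ with at least one $\sigma_j \in \{X,Y,Z\}$, and since $\mathrm{Tr}(X) = \mathrm{Tr}(Y) = \mathrm{Tr}(Z) = 0$, multiplicativity of trace across tensor factors yields $\mathrm{Tr}(g) = 0$. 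Therefore $\dim(C) = 2^{n-m}$, and since this must be a positive integer we get $m \leq n$; setting $k := n - m$ delivers the asserted description with $n-k$ generators.

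The main thing to get right is the first step: using the single hypothesis $-I \notin S$ to force $g^2 = I$ for every $g \in S$. Once that is secured, everything else is routine bookkeeping --- a bijection between subsets of the generators and elements of $S$, a group rearrangement to verify $P^2 = P$, and the standard observation that any nontrivial tensor of Pauli matrices is traceless.
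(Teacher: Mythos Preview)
Your proof is correct and complete. The paper does not actually prove this proposition: it is stated as a well-known fact (with a reference to the standard literature on stabilizer codes) and used without proof, so there is no argument in the paper to compare against. Your approach --- forcing $g^2 = I$ from $-I \notin S$, recognizing $S$ as an elementary abelian $2$-group, building $P$ as the product $\frac{1}{2^m}\prod_i (I+g_i)$, and reading off $\dim(C)$ from $\mathrm{Tr}(P)$ via the tracelessness of nontrivial Pauli tensors --- is exactly the standard textbook argument the paper is implicitly relying on.
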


We remark that every element of a stabilizer group $S$ is necessarily Hermitian. Indeed, if $g \in P_n$ and $g$ is not Hermitian, then $g^2 = -I$. However $-I \notin S$. In the context of stabilizer codes, we will be especially interested in a related class of mixed-unitary channels which we call \textbf{Pauli channels}.

\begin{definition}[Pauli channels]
By a \textbf{Pauli channel}, we mean a quantum channel $\mathcal{E}: M_{2^n} \rightarrow M_{2^n}$ of the form $x \mapsto \sum_{i=1}^m \lambda_i E_i x E_i^\dagger$ for some $E_1, E_2, \dots, E_m \in P_n$ and $\lambda_1, \lambda_2, \dots, \lambda_m > 0$ satisfying $\sum_i \lambda_i = 1$. 
\end{definition}

For convenience, we will ignore the scalars $\lambda$ and refer to $\{E_1, E_2, \dots, E_m\}$ as the noise operators for the Pauli channel $\mathcal{E}$ for the remainder of this paper. The next theorem characterizes the error-correcting stabilizer codes for a given Pauli channel.

\begin{theorem}[Gottesman, \cite{Gottesman96}]
Let $n \in \mathbb{N}$ and suppose that $\mathcal{E}: M_{2^n} \to M_{2^n}$ is a Pauli channel with noise operators $E_1,E_2,\dots,E_m \in P_n$. Then $C \subset \mathbb{C}^{2^n}$ is an error-correcting stabilizer code for $\mathcal{E}$ with stabilizer $S$ if and only if $E_i^\dagger E_j \notin Z(S) \setminus S$ for all $i,j \leq m$, where $Z(S)$ is the center of $S$ in $P_n$.
\end{theorem}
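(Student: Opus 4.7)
The plan is to combine the Knill--Laflamme theorem with the explicit formula for the code projector from Proposition~\ref{prop: Properties of stabilizer codes}. By Theorem~\ref{thm: Knill-Laflamme Theorem}, $C$ is an error-correcting stabilizer code for $\phi$ if and only if $C$ is a quantum anticlique for $G_\phi = \Span\{E_i^\dagger E_j\}$, i.e.\ $\dim(PG_\phi P) = 1$ with $P = P(C)$. Since $I \in G_\phi$ we have $P \in PG_\phi P$ automatically, so the task reduces to: for every $i,j$, $PE_i^\dagger E_j P$ should be a scalar multiple of $P$. I would therefore focus on classifying, for an arbitrary $E \in P_n$, when $PEP \in \Span\{P\}$.

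For the classification I use $P = \tfrac{1}{2^{n-k}}\sum_{g \in S} g$ and the standard fact that any two elements of $P_n$ either commute or anticommute. If $E \notin Z(S)$, choose $g \in S$ with $Eg = -gE$; since $g$ is Hermitian with $g^2 = I$ and $gP = Pg = P$,
\[ PEP \;=\; P(gEg)P \;=\; P(-E)P \;=\; -PEP, \]
forcing $PEP = 0$. If $E \in S$, then $EP = P$ because left multiplication by $E$ permutes $S$, giving $PEP = P$. In the remaining case $E \in Z(S) \setminus S$, $E$ commutes with $P$ so $PEP = EP$, and one must show $EP$ is not a scalar multiple of $P$: if $EP = \lambda P$ then $\lambda^{-1}E$ acts trivially on $C$, and since $E^2 \in \{\pm I\}$ one has $\lambda \in \{\pm 1, \pm i\}$, so $\lambda^{-1}E$ is a Hermitian element of $P_n$ stabilizing $C$; taking $S$ to be the full stabilizer of $C$, this places $\lambda^{-1}E$ inside $S$, contradicting $E \notin S$ after the phase is absorbed. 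Linearity of $\Span$ then yields $\dim(PG_\phi P) = 1$ if and only if no $E_i^\dagger E_j$ lies in $Z(S)\setminus S$.

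The main obstacle I anticipate is the phase bookkeeping in that final step. An element $-g$ with $g \in S$ literally lies in $Z(S)\setminus S$, yet satisfies $P(-g)P = -P \in \Span\{P\}$, and similarly for $\pm ig$; so as stated the condition "$E_i^\dagger E_j \notin Z(S)\setminus S$" looks strictly stronger than the quantum anticlique condition. Reconciling this requires adopting Gottesman's convention of treating $S$ and $Z(S)$ modulo the central phases $\{\pm 1, \pm i\}$, so that scalar multiples of elements of $S$ are identified with $S$ itself and hence do not count as elements of $Z(S)\setminus S$. Making this identification rigorous against the paper's explicit Pauli-group definition of $S$ is the delicate point I would want to pin down carefully in a full write-up.
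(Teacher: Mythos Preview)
The paper does not supply its own proof of this theorem; it is quoted from \cite{Gottesman96} as background in the preliminaries section. Your argument is the standard one and is essentially correct: the three-case analysis of $PEP$ according to whether $E \notin Z(S)$, $E \in S$, or $E \in Z(S)\setminus S$ is exactly the right organization, and in fact the paper later reproves the first case (that $PEP=0$ when $E\notin Z(S)$) as Lemma~\ref{lem: zero compression} by precisely the computation you give.

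The obstacle you flag is genuine and well observed. With $S$ defined literally as a subgroup of $P_n$ (so that $-I \notin S$ but $-g,\ \pm ig \in Z(S)\setminus S$ for every $g\in S$), the stated equivalence fails in the direction ``$E_i^\dagger E_j \in Z(S)\setminus S \Rightarrow$ not error-correcting'': taking $E_i = I$ and $E_j = -g$ with $g \in S$ already gives a counterexample, since $P(-g)P = -P \in \Span\{P\}$. The intended reading is the one you describe, with $Z(S)\setminus S$ taken modulo the central phases $\{\pm 1,\pm i\}$ (equivalently, replacing $S$ by $\{i^k g : g\in S,\ 0\le k\le 3\}$ in the exclusion). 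This is the standard convention in the stabilizer-code literature, and the paper itself tacitly adopts it in Section~3 by passing to check vectors, which collapse exactly these phase factors. So both your diagnosis and your proposed resolution are correct; a full write-up would simply make this identification explicit before invoking the theorem.
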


We conclude this section by recalling some techniques from the theory of stabilizer codes that will be useful. Let $n \in \mathbb{N}$. Then for vectors $\vec{a}, \vec{b} \in \mathbb{F}_2^n$, we define $X_{\vec{a}}:= X^{a_1} \otimes X^{a_2} \otimes \dots \otimes X^{a_n}$ and $Z_{\vec{b}} = Z^{b_1} \otimes Z^{b_2} \otimes \dots \otimes Z^{b_n}$. Since $ZX = -iY$ and $XZ = iY$, every element of $P_n$ can be written uniquely as $i^k X_{\vec{a}} Z_{\vec{b}}$ for some $k \in \{0,1,2,3\}$, $\vec{a},\vec{b} \in \mathbb{F}_2^n$. Given $g \in P_n$ with $g = i^k X_{\vec{a}} Z_{\vec{b}}$, we define its \textbf{check vector} to be the vector $r(g) := \vec{a} \oplus \vec{b} \in \mathbb{F}_2^{2n}$. It is easy to check that if $g,h \in P_n$ then $gh = hg$ if and only if \[ r(g)^T \begin{pmatrix} 0_n & I_n \\ I_n & 0_n \end{pmatrix} r(h) = \vec{0}. \] Equivalently, if $r(g) = \vec{a} \oplus \vec{b}$ and $r(h) = \vec{c} \oplus \vec{d}$, then $gh=hg$ if and only if $\langle \vec{a}, \vec{d} \rangle + \langle \vec{b}, \vec{c} \rangle = 0$, where the inner product is taken over the finite field $\mathbb{F}_2$. For convenience, we define the \textbf{twisted dot product} of two vectors $x= \vec{a} \oplus \vec{b}, y=\vec{c} \oplus \vec{d} \in \mathbb{F}_2^{2n}$ by \[ x * y := \langle \vec{a}, \vec{d} \rangle + \langle \vec{b}, \vec{c} \rangle. \] We summarize some properties of check vectors we will need in the following proposition.

\begin{proposition} \label{prop: properties of check vectors}
Let $n \in \mathbb{N}$. Then the following statements are true.
\begin{enumerate}
    \item For every $g,h \in P_n$, $g$ is a scalar multiple of $h$ if and only if $r(g) = r(h)$.
    \item For every $g,h \in P_n$, $gh=hg$ if and only if $r(g)*r(h)=0$. Otherwise $gh=-hg$.
    \item For every $g,h \in P_n$, $r(gh) = r(g)+r(h)$ and $r(g^\dagger) = r(g)$.
    \item A set $\{g_1, g_2, \dots, g_k\} \subseteq P_n$ of operators are independent as group elements of $P_n$ if and only if the set $\{r(g_1), r(g_2), \dots, r(g_k)\}$ is linearly independent in $\mathbb{F}_2^{2n}$.
\end{enumerate}
\end{proposition}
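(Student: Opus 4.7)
The plan is to derive all four properties from a single product identity that exposes the homomorphism structure underlying $r$. Namely, I would first establish that for any $\vec{a},\vec{b},\vec{c},\vec{d} \in \mathbb{F}_2^n$,
\[ (X(\vec{a})Z(\vec{b}))(X(\vec{c})Z(\vec{d})) = (-1)^{\langle \vec{b},\vec{c}\rangle}\, X(\vec{a}+\vec{c})\, Z(\vec{b}+\vec{d}), \]
with addition and inner product taken over $\mathbb{F}_2$. This reduces, tensor factor by tensor factor, to the single-qubit identity $Z^b X^c = (-1)^{bc} X^c Z^b$, which itself follows from $XZ = -ZX$ together with $X^2 = Z^2 = I$.

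With this identity in hand, property (3) is immediate: the check vector of a product is $(\vec{a}+\vec{c}) \oplus (\vec{b}+\vec{d}) = r(g) + r(h)$, since check vectors discard the scalar prefactor. For the adjoint, $X$ and $Z$ are Hermitian, so $(X(\vec{a})Z(\vec{b}))^\dagger = Z(\vec{b})X(\vec{a}) = (-1)^{\langle \vec{a},\vec{b}\rangle}X(\vec{a})Z(\vec{b})$ by the product formula, which differs from the original only by a scalar and thus preserves the check vector. Property (2) now drops out by applying the product formula to both $gh$ and $hg$: the unsigned part $X(\vec{a}+\vec{c})Z(\vec{b}+\vec{d})$ is the same, while the signs differ by a factor of $(-1)^{\langle \vec{b},\vec{c}\rangle + \langle \vec{d},\vec{a}\rangle} = (-1)^{r(g)*r(h)}$, forcing $gh = (-1)^{r(g)*r(h)}hg$.

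Property (1) follows from the uniqueness of the canonical decomposition $g = i^k X(\vec{a})Z(\vec{b})$: if $r(g) = r(h)$, then $g$ and $h$ share the same $\vec{a}$ and $\vec{b}$ and hence differ only by some power of $i$. Conversely, any scalar multiple of a Pauli operator has the same $X$ and $Z$ parts and hence the same check vector. Finally, property (4) falls out of viewing $r$ as a group homomorphism $P_n \to \mathbb{F}_2^{2n}$ (by (3)) whose kernel is exactly the central scalars $\{\pm I, \pm iI\}$ (by (1)). Modulo these scalars, $P_n$ is an elementary abelian $2$-group isomorphic via $r$ to $\mathbb{F}_2^{2n}$, so a multiplicative dependence among the $g_i$ (a nontrivial product equal to a scalar) corresponds exactly to an $\mathbb{F}_2$-linear dependence among the $r(g_i)$.

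All four steps are essentially computational; the only point requiring care is distinguishing statements that hold \emph{up to scalar phase} (as in (1) and (4)) from identities on the nose (as in (2) and (3)). The product identity above cleanly separates the phase data from the symplectic data, so nothing beyond careful bookkeeping is required.
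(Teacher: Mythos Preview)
Your proposal is correct. The paper does not give a formal proof of this proposition: it is stated as a summary of standard facts about check vectors, with only a brief inline remark preceding it that indicates the commutation criterion in part (2). Your approach---deriving everything from the single product identity $(X(\vec{a})Z(\vec{b}))(X(\vec{c})Z(\vec{d})) = (-1)^{\langle \vec{b},\vec{c}\rangle}X(\vec{a}+\vec{c})Z(\vec{b}+\vec{d})$ and then reading off the homomorphism $r:P_n\to\mathbb{F}_2^{2n}$ with kernel $\{\pm I,\pm iI\}$---is exactly the standard route and fills in what the paper leaves implicit.
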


\section{Results} 

To arrive at our main result, we will need to study the dimension of $P G_{\mathcal{E}} P$ where $P$ is the orthogonal projection onto some stabilizer code and $\mathcal{E}$ is a Pauli channel. We begin by characterizing the possible values of $PgP$ whenever $g \in P_n$. Here and throughout this section, we write $g \sim h$ whenever $g,h \in P_n$ and $g$ is a scalar multiple of $h$ (i.e. $g = i^k h$ for some integer $k$).

We begin with some simple observations.

\begin{lemma} \label{lemma: non-zero trace}
Let $g \in P_n$. Then $Tr(g) \neq 0$ if and only if $g = i^k I$ for some integer $k$.
\end{lemma}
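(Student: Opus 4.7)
The plan is to reduce the computation of $\mathrm{Tr}(g)$ to a product of traces of the individual Pauli factors by exploiting the tensor structure of $P_n$ together with the multiplicativity of the trace under tensor products.

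First, I would write $g$ in its canonical form guaranteed by the definition of $P_n$, namely $g = i^k \sigma_1 \otimes \sigma_2 \otimes \dots \otimes \sigma_n$ with $k \in \{0,1,2,3\}$ and each $\sigma_j \in \{I,X,Y,Z\}$. Using $\mathrm{Tr}(A \otimes B) = \mathrm{Tr}(A)\,\mathrm{Tr}(B)$ applied inductively across the tensor factors, I obtain
\[
\mathrm{Tr}(g) \;=\; i^k \prod_{j=1}^n \mathrm{Tr}(\sigma_j).
\]

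Next I would record the one-qubit trace values. A direct computation from the matrix definitions gives $\mathrm{Tr}(I)=2$ and $\mathrm{Tr}(X)=\mathrm{Tr}(Y)=\mathrm{Tr}(Z)=0$. Therefore the product $\prod_j \mathrm{Tr}(\sigma_j)$ is nonzero precisely when every $\sigma_j$ equals $I$, in which case it equals $2^n$, and $i^k$ is of course always nonzero.

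Combining these observations yields both directions simultaneously. If every $\sigma_j = I$ then $g = i^k I$ and $\mathrm{Tr}(g) = 2^n i^k \neq 0$; conversely, if some $\sigma_j \in \{X,Y,Z\}$ then the product vanishes and $\mathrm{Tr}(g) = 0$. This gives the biconditional. There is no real obstacle here; the only point requiring any care is ensuring that the canonical form used is indeed the one fixed in the definition of $P_n$ earlier in the paper, so that the scalar is absorbed cleanly into the prefactor $i^k$ rather than being distributed among the tensor factors.
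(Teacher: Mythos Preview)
Your proof is correct and follows exactly the same approach as the paper's own argument: reduce to the one-qubit factors via multiplicativity of the trace under tensor products and use $\mathrm{Tr}(X)=\mathrm{Tr}(Y)=\mathrm{Tr}(Z)=0$. The paper's version is simply a one-line sketch of what you have spelled out in detail.
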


\begin{proof}
This is clear from the definition of $P_n$, since $Tr(X)=Tr(Y)=Tr(Z) = 0$ and $Tr(a \otimes b) = Tr(a)Tr(b)$.
\end{proof}

\begin{lemma} \label{lemma: Trace similar}
Let $g,h \in P_n$. Then $g \sim h$ if and only if $Tr(gh) \neq 0$.
\end{lemma}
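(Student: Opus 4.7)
The plan is to prove both directions by reducing to the identity case via Lemma~\ref{lemma: non-zero trace}. The key observation is that for any $h \in P_n$ of the form $h = i^j \sigma_1 \otimes \cdots \otimes \sigma_n$ with $\sigma_i \in \{I,X,Y,Z\}$, we have $h^2 = (i^j)^2 (\sigma_1^2 \otimes \cdots \otimes \sigma_n^2) = (-1)^j I$, since each Pauli matrix squares to $I$. In particular, $h^{-1}$ is always a scalar multiple of $h$.

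For the forward implication, I would assume $g \sim h$, so that $g = i^k h$ for some integer $k$. Then $gh = i^k h^2 = i^k (-1)^j I = i^{k+2j} I$, which is a nonzero scalar multiple of $I$, so by the previous lemma $Tr(gh) \neq 0$ (or directly, $Tr(gh) = \pm i^k \, 2^n$).

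For the reverse implication, I would assume $Tr(gh) \neq 0$. Since $gh$ is again an element of $P_n$ (the Pauli group is closed under products), Lemma~\ref{lemma: non-zero trace} gives $gh = i^\ell I$ for some integer $\ell$. Multiplying on the right by $h^{-1}$ and using the observation above that $h^{-1} = (-1)^j h$, we obtain $g = i^\ell h^{-1} = i^{\ell+2j} h$, which says precisely that $g \sim h$.

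No real obstacle is anticipated; the only thing to be careful about is correctly computing $h^{-1}$ for a general element of $P_n$ (including the possible phase $i^j$) so that the reverse direction actually yields a scalar multiple relation rather than just an equality up to sign. Both directions are short once Lemma~\ref{lemma: non-zero trace} and the fact that Pauli tensors square to $\pm I$ are in hand.
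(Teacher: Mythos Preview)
Your proof is correct and follows essentially the same approach as the paper: both arguments apply Lemma~\ref{lemma: non-zero trace} to the product $gh$ and then use that every element of $P_n$ squares to $\pm I$ (equivalently, $h^{-1}$ is a scalar multiple of $h$) to pass between $gh = i^\ell I$ and $g \sim h$. The paper compresses the two directions into one biconditional chain, whereas you spell them out separately with explicit phase bookkeeping, but the underlying idea is identical.
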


\begin{proof}
By Lemma \ref{lemma: non-zero trace}, $Tr(gh) \neq 0$ if and only if $gh = i^k I$ for some integer $k$. Since $g^2 = \pm I$ for every $g \in P_n$ we see that $\pm h = i^k g$ if and only if $gh = i^k I$. The result follows.
\end{proof}

\begin{lemma} \label{lem: zero compression}
Let $S \subseteq P_n$ be a stabilizer group and let $P$ be the orthogonal projection onto the stabilizer code $C(S)$. Then for each $g \in P_n$, $PgP = 0$ if and only if $g \notin Z(S)$.
\end{lemma}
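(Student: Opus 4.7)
The plan is to exploit two facts recorded earlier in the paper: the explicit formula $P = \frac{1}{|S|}\sum_{h \in S} h$ from Proposition~\ref{prop: Properties of stabilizer codes}, and the dichotomy that any two elements of $P_n$ either commute or anticommute (which is the second statement of Proposition~\ref{prop: properties of check vectors}). The essential preliminary step I would record first is the identity $hP = Ph = P$ for every $h \in S$: the first equality holds because the stabilizer code is by definition the joint $+1$-eigenspace of $S$, and the second follows by taking adjoints, using that elements of a stabilizer group are Hermitian (as noted right after Proposition~\ref{prop: Properties of stabilizer codes}).

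For the direction $g \notin Z(S) \Rightarrow PgP = 0$, I would pick an $h \in S$ that fails to commute with $g$; by the Pauli dichotomy this forces $hg = -gh$. Inserting $h$ on the left using $P = Ph$ and on the right using $P = hP$ then gives $PgP = (Ph)gP = P(hg)P = -P(gh)P = -Pg(hP) = -PgP$, so $PgP = 0$.

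For the converse direction $g \in Z(S) \Rightarrow PgP \neq 0$, I would observe that $g$ commutes with each summand in the formula for $P$, hence commutes with $P$ itself, so $PgP = gP^2 = gP$. Since $g$ is a scalar multiple of a unitary (hence invertible) and $P \neq 0$, the product $gP$ is necessarily nonzero.

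The whole argument really rests on a single structural feature of the Pauli group: the commute/anticommute dichotomy. Without the precise sign flip $hg = -gh$, noncommutativity alone would not produce the cancellation $PgP = -PgP$ that kills the left-hand side. Beyond organizing these short observations I do not foresee any genuine obstacles.
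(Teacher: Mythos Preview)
Your proof is correct. The forward direction is essentially the same idea as the paper's: both exploit anticommutation with a single $h \in S$ to kill $PgP$, though the paper uses the factorization $(I+h)g(I+h) = 0$ together with the product formula $P = \frac{1}{2^{n-k}}\prod_i (I+g_i)$, whereas you use the cleaner absorption identity $Ph = hP = P$ directly. For the converse the two arguments diverge: the paper reduces to a trace computation, showing $Tr((PgP)^2) = Tr(g^2P) = Tr(P) > 0$ after replacing $g$ by a Hermitian multiple, while you observe that $g \in Z(S)$ forces $PgP = gP$ and conclude from invertibility of $g$ and $P \neq 0$. Your route is slightly more elementary since it avoids the trace and the Hermitian normalization; the paper's trace calculation, on the other hand, dovetails with the trace-orthogonality computations used in the subsequent lemma.
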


\begin{proof}
Suppose that $g \notin Z(S)$. Then $g$ anti-commutes with some non-trivial element $h$ of $S$. When this occurs we have $(I+h)g(I+h) = g + hg + gh + hgh = 0$ since $h$ is necessarily Hermitian and $h^2 = I$. Since we may assume any non-trivial element of $S$ is a generator of $S$, we see that $PgP = 0$ by the product form of $P$ in Proposition \ref{prop: Properties of stabilizer codes}. Now suppose that $g \in Z(S)$. Without loss of generality we may assume $g$ is Hermitian - otherwise consider $ig$. Then $Tr((PgP)^2) = Tr(PgP^2gP) = Tr(g^2P) = Tr(P) > 0$. It follows that $PgP \neq 0$.
\end{proof}

\begin{lemma} \label{lem: trace-orthogonal projections}
Let $S \subseteq P_n$ be a stabilizer group and let $P$ be the orthogonal projection onto the stabilizer code $C(S)$. Then for each $g, h \in Z(S)$, $PgP$ and $PhP$ are trace-orthogonal if and only if $gh \sim s$ for some $s \in S$.
\end{lemma}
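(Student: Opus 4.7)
The plan is to reduce trace-orthogonality of $PgP$ and $PhP$ to a single trace computation via the explicit formula $P = \tfrac{1}{|S|} \sum_{s \in S} s$ given by Proposition \ref{prop: Properties of stabilizer codes}.

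First, since $g, h \in Z(S)$ they commute with every element of $S$, and hence with $P$, so $PgP = gP^2 = gP$ and likewise $PhP = hP$. Writing $g^\dagger = c\,g$ for some unit scalar $c \in \{\pm 1, \pm i\}$ (using that every Pauli tensor is a scalar multiple of its own adjoint), I would then compute
\[
Tr\bigl((PgP)^\dagger (PhP)\bigr) \;=\; Tr(P g^\dagger \cdot hP) \;=\; Tr(g^\dagger h P) \;=\; c\,Tr(ghP),
\]
so trace-orthogonality of $PgP$ and $PhP$ reduces to the vanishing of $Tr(ghP)$.

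Next I would expand $P$ as above to obtain
\[
Tr(ghP) \;=\; \frac{1}{|S|} \sum_{s \in S} Tr(ghs),
\]
and apply Lemma \ref{lemma: non-zero trace} termwise. Each summand $Tr(ghs)$ is nonzero if and only if $ghs$ is a scalar multiple of $I$; since every $s \in S$ is a Hermitian involution (so $s^{-1}=s$), this is equivalent to $gh \sim s$. Thus the lemma's hypothesis "$gh \sim s$ for some $s \in S$" is precisely the condition that some term of this sum survives.

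Finally I would verify that at most one $s \in S$ can satisfy $gh \sim s$: if $gh \sim s$ and $gh \sim s'$ for $s,s' \in S$, then $ss' \in S$ would be a scalar multiple of $I$, and since $-I \notin S$ (and $\pm iI$ would square to $-I$), the only scalar in $S$ is $I$, forcing $s=s'$. Consequently the sum either vanishes identically or collapses to a single nonzero term, and the stated equivalence follows by comparing the two cases. The main technical obstacle is cleanly bookkeeping the scalar factors $i^k$ attached to elements of $P_n$; the key structural fact that lets the uniqueness step succeed is that $S$ meets the scalar subgroup of $P_n$ only in $\{I\}$, which is built into the definition of a stabilizer group.
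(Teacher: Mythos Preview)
Your proposal is correct and follows essentially the same path as the paper: both reduce trace-orthogonality to the vanishing of $Tr(ghP)$, expand $P = \frac{1}{|S|}\sum_{s\in S} s$, and analyze the sum termwise using Lemma~\ref{lemma: non-zero trace}. The only noteworthy difference is in the nonvanishing case: you argue that at most one $s \in S$ can satisfy $gh \sim s$ (using that $S$ meets the scalar subgroup only in $\{I\}$), so the sum collapses to a single nonzero term, whereas the paper instead uses the group substitution $r \mapsto sr$ to rewrite the sum as $i^m\,Tr(P) \neq 0$. Your uniqueness argument is arguably cleaner, and your explicit handling of the adjoint (writing $g^\dagger = c\,g$) is a bit more careful than the paper's, which silently computes $Tr(PgP\cdot PhP)$ rather than $Tr((PgP)^\dagger PhP)$; since $c$ is a nonzero scalar this makes no difference to the conclusion.
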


\begin{proof}
By the summation form of $P$ in Proposition \ref{prop: Properties of stabilizer codes} we have \begin{eqnarray} Tr(PgPPhP) & = & Tr(ghP) \nonumber \\ & = & \sum_{s \in S} \frac{1}{2^{n-k}} Tr(ghs). \nonumber \end{eqnarray} Now for each $s \in S$, $Tr(ghs) \neq 0$ if and only if $gh \sim s$ by Lemma \ref{lemma: Trace similar}. Therefore if $gh \sim s$ is false for all $s \in S$ then $PgP$ and $PhP$ are trace orthogonal since $Tr(ghs)=0$ for all $s \in S$ in that case. On the other hand, suppose that $gh \sim s$ for some $s \in S$. Say $gh = i^m s$. Then \begin{eqnarray} \sum_{r \in S} \frac{1}{2^{n-k}} Tr(ghr) & = & \frac{i^m}{2^{n-k}} \sum_{r \in S} Tr(sr) \nonumber \\ & = & \frac{i^m}{2^{n-k}} \sum_{r \in S} Tr(s(sr)) \nonumber \\ & = & i^m Tr(P) \neq 0. \nonumber \end{eqnarray} The statement follows.
\end{proof}

\begin{definition} \label{defn: check vector image of set}
Let $W$ be any subset of $P_n$. We define \[ L(W) := \{ r(g) : g \in W \} \subseteq \mathbb{F}_2^{2n}. \]
\end{definition}

Using Definition \ref{defn: check vector image of set} and the lemmas above, we can prove the following characterization of the dimension of $PG_{\mathcal{E}}P$. Observe that when $S \subset P_n$ is a subgroup then $L(S)$ is a subspace of $\mathbb{F}_2^{2n}$ by part 3 of Proposition \ref{prop: properties of check vectors}.

\begin{theorem} \label{thm: dimension of PSP}
Let $\mathcal{E}: M_{2^n} \rightarrow M_{2^n}$ be a Pauli channel with noise operators $\{E_1, E_2, \dots, E_m \} \subseteq P_n$, and let $W_{\mathcal{E}} = \{ E_i^\dagger E_j\}$. Then for each stabilizer group $S \subset P_n$ we have \[ \dim(P G_{\mathcal{E}} P) = | \pi(L(W_{\mathcal{E}})) \cap L(Z(S))/L(S) | \] where $P$ is the projection onto the stabilizer code $C(S)$ and $\pi: \mathbb{F}_2^{2n} \rightarrow \mathbb{F}_2^{2n} / L(S)$ is the quotient map $\vec{a} \mapsto \vec{a} + L(S)$.
\end{theorem}

\begin{proof}
From part 1 of Proposition \ref{prop: properties of check vectors} we see that for each $g, h \in P_n$, $g \sim h$ if and only if $r(g) = r(h)$. Now suppose that $s=g_i g_j$ for some $s \in S$. Then by part 3 of Proposition \ref{prop: properties of check vectors}, $r(s) = r(g_i) + r(g_j)$ and hence $r(g_i) + L(S) = r(g_j) + L(S)$ in $\mathbb{F}_2^{2n} / L(S)$. Likewise, if $r(g_i) + S = r(g_j) + S$, then $r(g_i) + r(g_j) \in L(S)$ and hence $g_i g_j \sim s$ for some $s \in S$. It follows from Lemma \ref{lem: zero compression} and Lemma \ref{lem: trace-orthogonal projections} that the dimension of $PG_{\mathcal{E}}P$ is precisely the number of cosets of $L(Z(S))/L(S)$ present in the set $\pi(L(W_{\mathcal{E}}))$.
\end{proof}

We remark that $|L(Z(S))/L(S)| = 2^{\dim(L(Z(S))/L(S))} = 2^{\dim(L(Z(S))) - \dim(L(S))}$ for any stabilizer group $S$. Suppose that $S$ is a stabilizer group with independent generators $g_1, g_2, \dots, g_{n-k} \in P_n$. By part 4 of Proposition \ref{prop: properties of check vectors}, $\dim(L(S)) = n-k$. Hence to calculate $|L(Z(S))/L(S)|$ it remains to determine $\dim(L(Z(S)))$.

\begin{lemma} \label{lem: dimension of L(Z(S))}
Let $S \subset P_n$ be a stabilizer code with $n-k$ independent generators. Then \[ \dim(L(Z(S))) = n+k. \]
\end{lemma}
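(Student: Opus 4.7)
The plan is to identify $L(Z(S))$ as the orthogonal complement of $L(S)$ under the twisted dot product, and then deduce the dimension from the non-degeneracy of that form.

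First I would observe that the map $r: P_n \to \mathbb{F}_2^{2n}$ is surjective: any $v = \vec{a} \oplus \vec{b} \in \mathbb{F}_2^{2n}$ is the check vector of $X(\vec{a})Z(\vec{b}) \in P_n$. Combining this with part 2 of Proposition \ref{prop: properties of check vectors}, an element $g \in P_n$ lies in $Z(S)$ precisely when $r(g)*r(s) = 0$ for every $s \in S$. Since the twisted dot product is $\mathbb{F}_2$-bilinear and $L(S)$ is spanned by the check vectors of a generating set for $S$, this condition is equivalent to $r(g) * w = 0$ for all $w \in L(S)$. Surjectivity of $r$ then gives
\[ L(Z(S)) = L(S)^{\perp_*} := \{v \in \mathbb{F}_2^{2n} : v * w = 0 \text{ for all } w \in L(S)\}. \]

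Next I would verify that the twisted dot product is a non-degenerate symmetric bilinear form on $\mathbb{F}_2^{2n}$. Symmetry is immediate: if $x=\vec{a}\oplus\vec{b}$ and $y=\vec{c}\oplus\vec{d}$, then $y*x = \langle \vec{c},\vec{b}\rangle + \langle \vec{d},\vec{a}\rangle = x*y$. For non-degeneracy, suppose $x * y = 0$ for every $y \in \mathbb{F}_2^{2n}$. Testing against $y = \vec{c}\oplus\vec{0}$ forces $\vec{b} = \vec{0}$, and testing against $y = \vec{0}\oplus\vec{d}$ forces $\vec{a} = \vec{0}$. Equivalently, the Gram matrix $\bigl(\begin{smallmatrix} 0_n & I_n \\ I_n & 0_n \end{smallmatrix}\bigr)$ is invertible.

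Since the form is non-degenerate on a finite-dimensional space over $\mathbb{F}_2$, every subspace $V \subseteq \mathbb{F}_2^{2n}$ satisfies $\dim V + \dim V^{\perp_*} = 2n$. By hypothesis $S$ has $n-k$ independent generators, so part 4 of Proposition \ref{prop: properties of check vectors} gives $\dim L(S) = n-k$. Combining everything,
\[ \dim L(Z(S)) = \dim L(S)^{\perp_*} = 2n - (n-k) = n+k, \]
which is the desired conclusion.

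The proof is essentially bookkeeping, so there is no substantive obstacle; the only subtlety is making sure that the surjectivity of $r$ is invoked so that $L(Z(S))$ — which is defined as the image of a centralizer in $P_n$ — really coincides with the full orthogonal complement in $\mathbb{F}_2^{2n}$ rather than merely being contained in it.
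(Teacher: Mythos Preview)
Your proof is correct and follows essentially the same approach as the paper: both identify $L(Z(S))$ as the orthogonal complement of $L(S)$ with respect to the twisted dot product and then obtain the dimension via rank--nullity. The only cosmetic difference is that you invoke the abstract statement ``$\dim V + \dim V^{\perp_*} = 2n$ for a non-degenerate form,'' whereas the paper writes down the explicit linear map $T:\mathbb{F}_2^{2n}\to\mathbb{F}_2^{n-k}$, $T(\vec a)_i=\vec a * r(g_i)$, checks its rank is $n-k$, and applies rank--nullity directly; your explicit mention of the surjectivity of $r$ is a point the paper leaves implicit.
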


\begin{proof}
By part 2 of Proposition \ref{prop: properties of check vectors}, $g \in Z(S)$ if and only if $r(g) * r(s) = 0$ for all $s \in S$. Suppose that $g_1, g_2, \dots, g_{n-k}$ are generators for $S$. Then $r(g) * r(s) = 0$ for all $s \in S$ if and only if $r(g) * r(g_i) = 0$ for all $i \leq n-k$.

Define a linear operator $T: \mathbb{F}_2^{2n} \rightarrow \mathbb{F}_2^{n-k}$ via $T(\vec{a})_i = \vec{a} * r(g_i)$. Then $\ker(T) = L(Z(S))$. The operator $T$ can be represented by the matrix whose $i$-th row is given by $(\vec{b}_i \oplus \vec{a}_i)^T$ where $r(g_i) = \vec{a}_i \oplus \vec{b}_i$. Since the set $\{r(g_1), r(g_2), \dots, r(g_{n-k})\}$ is linearly independent, the rank of $T$ is $n-k$. By the rank-nullity Theorem, $\dim(L(Z(S))) = \dim(\ker(T)) = 2n - (n-k) = n+k$.
\end{proof}

We can now characterize the stabilizer codes which are quantum cliques for $G_{\mathcal{E}}$ for a given Pauli channel $\mathcal{E}$.

\begin{theorem} \label{thm: clique characterization}
Let $\mathcal{E}: M_{2^n} \rightarrow M_{2^n}$ be a Pauli channel with noise operators $\{E_1, E_2, \dots, E_m\} \subseteq P_m$, and let $W_{\mathcal{E}} = \{ E_i^\dagger E_j\}$. Then for each stabilizer group $S \subset P_n$ we have that $C(S)$ is a quantum clique if and only if \[  L(Z(S))/L(S) \subseteq \pi(L(W_{\mathcal{E}})) \] where $\pi: \mathbb{F}_2^{2n} \rightarrow \mathbb{F}_2^{2n} / L(S)$ is the quotient map $\vec{a} \mapsto \vec{a} + L(S)$.
\end{theorem}

\begin{proof}
By Proposition \ref{prop: Properties of stabilizer codes}, $\dim(C(S)) = 2^k$. Let $P$ be the projection onto $C(S)$. Then $C(S)$ is a quantum clique if and only $\dim(PG_{\mathcal{E}}P) = (2^k)^2 = 2^{2k}$. From Theorem \ref{thm: dimension of PSP}, we see that \[ \dim(P G_{\mathcal{E}} P) = | \pi(L(W_{\mathcal{E}})) \cap L(Z(S))/L(S) |. \] However $\dim(L(Z(S)) / L(S)) = (n+k) - (n-k) = 2k$ by Lemma \ref{lem: dimension of L(Z(S))}. Hence $| L(Z(S))/L(S)| = 2^{2k}$. We conclude that $\dim(P G_{\mathcal{E}} P) = 2^{2k}$ if and only if $L(Z(S))/L(S) \subseteq \pi(W_{\mathcal{E}})$.
\end{proof}

Having characterized the quantum cliques of a Pauli channel $\mathcal{E}$ which are stabilizer codes in terms of the set $W_{\mathcal{E}}$, we are almost ready to prove the main theorem. We will achieve this by demonstrating that for every $n$, there exists a Pauli channel $\mathcal{E}$ with no non-trivial quantum anti-cliques or quantum cliques which are stabilizer codes. In fact, we can construct an entire family of examples. To do this we need two more lemmas.

\begin{lemma} \label{lemma: basis for f_2^2n}
Assume that $\{h_1, h_2, \dots, h_n\} \subseteq P_n$ are commuting independent Hermitian operators. Then there exist commuting independent Hermitian operators $\{g_1, g_2, \dots, g_n\} \subseteq P_n$ such that for all $i \neq j$ we have $g_i h_i = -h_i g_i$ and $g_i h_j = h_j g_i$. Furthermore, $\{h_1, h_2, \dots, h_n, g_1, g_2, \dots, g_n\}$ is independent in $P_n$.
\end{lemma}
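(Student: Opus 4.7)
The plan is to translate the problem to linear algebra in $\mathbb{F}_2^{2n}$ via the check-vector map $r$, recognizing that the twisted dot product $*$ is a non-degenerate symplectic form. By Proposition \ref{prop: properties of check vectors}, the hypotheses say that $\{r(h_1), \dots, r(h_n)\}$ is a linearly independent, totally isotropic set (a Lagrangian) in $\mathbb{F}_2^{2n}$. The statement then amounts to producing a \emph{symplectic dual basis}: vectors $v_1, \dots, v_n \in \mathbb{F}_2^{2n}$ satisfying $v_i * r(h_j) = \delta_{ij}$ and $v_i * v_j = 0$ for all $i,j$, and then lifting each $v_i$ to a Hermitian element $g_i \in P_n$ with $r(g_i) = v_i$.

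I would build the $v_i$'s by induction on $n$. For the base case $n=1$, non-degeneracy of $*$ plus $r(h_1) \neq 0$ directly gives a vector $v_1$ with $v_1 * r(h_1) = 1$. For the inductive step, note that the linear map $\mathbb{F}_2^{2n} \to \mathbb{F}_2^{n}$, $v \mapsto \bigl(v * r(h_1), \dots, v * r(h_n)\bigr)$, is surjective (its kernel has dimension $n$ because the form is non-degenerate and the $r(h_j)$ are independent), so I can pick $v_1$ with $v_1 * r(h_1)=1$ and $v_1 * r(h_j)=0$ for $j \geq 2$. The pair $\{r(h_1), v_1\}$ is then a hyperbolic pair, and the ambient space decomposes as the orthogonal direct sum of $\mathrm{span}\{r(h_1), v_1\}$ and its symplectic complement $U := \{r(h_1), v_1\}^\perp$, on which $*$ is again non-degenerate. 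Since $r(h_2), \dots, r(h_n)$ already lie in $U$ (isotropy gives $r(h_j)*r(h_1)=0$, and $r(h_j)*v_1 = 0$ by construction), the inductive hypothesis applied inside $U$ produces $v_2, \dots, v_n$ with the required pairing properties. Linear independence of $\{r(h_1),\dots,r(h_n),v_1,\dots,v_n\}$ is automatic from the non-singularity of the resulting $2n \times 2n$ symplectic Gram matrix.

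To lift, I would use that for any $v = \vec{a} \oplus \vec{b} \in \mathbb{F}_2^{2n}$ the element $X(\vec{a})Z(\vec{b})$ has check vector $v$, and a short calculation using $ZX = -XZ$ shows $(X(\vec{a})Z(\vec{b}))^\dagger = (-1)^{\langle \vec{a},\vec{b}\rangle} X(\vec{a})Z(\vec{b})$; hence taking $g_i := i^{\langle \vec{a}_i, \vec{b}_i\rangle} X(\vec{a}_i)Z(\vec{b}_i)$ yields a Hermitian Pauli operator with $r(g_i) = v_i$. The remaining claims then drop out of Proposition \ref{prop: properties of check vectors}: the commutations $g_i g_j = g_j g_i$ and $g_i h_j = h_j g_i$ ($i \neq j$) follow from $v_i * v_j = 0$ and $v_i * r(h_j) = 0$ via part 2, the anticommutations $g_i h_i = -h_i g_i$ follow from $v_i * r(h_i) = 1$, and independence of $\{h_1,\dots,h_n,g_1,\dots,g_n\}$ in $P_n$ follows from the linear independence of the $2n$ check vectors via part 4.

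The only genuinely delicate point is ensuring that the symplectic form restricts to a non-degenerate form on the complement $U$ at each inductive stage; this reduces to the standard fact that a hyperbolic pair spans a non-degenerate subspace, so the ambient space splits orthogonally. Everything else is a direct application of the dictionary between $P_n$ and $(\mathbb{F}_2^{2n}, *)$ already set up in the preliminaries.
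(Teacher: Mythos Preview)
Your argument is correct and is essentially the textbook construction of a symplectic basis by successively splitting off hyperbolic pairs; the lifting to Hermitian Paulis via $i^{\langle \vec a,\vec b\rangle}X(\vec a)Z(\vec b)$ is also fine. The paper takes a different, more hands-on route: rather than an induction inside shrinking symplectic complements, it first produces each $g_l$ \emph{separately} by a dimension count---removing $h_l$ gives a stabilizer $S_l$ with $\dim L(Z(S_l))=n+1$, so one can pick $g_l\in Z(S_l)$ independent of $h_1,\dots,h_n$, which forces $g_lh_l=-h_lg_l$---and only afterward repairs the mutual commutativity of the $g_l$'s by a Gram--Schmidt-style correction, replacing $g_k$ by $h_1g_k$ whenever $g_k$ anticommutes with $g_1$, then iterating. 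Independence of the full set is then argued by showing the $2n$ check vectors span $\mathbb{F}_2^{2n}$ directly. Your approach is cleaner and more conceptual, and gets commutativity of the $g_i$'s for free; the paper's approach stays closer to the stabilizer-code machinery already developed (in particular it reuses Lemma~\ref{lem: dimension of L(Z(S))}) and avoids invoking the orthogonal splitting for hyperbolic pairs.
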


\begin{proof}
Let $\{h_1, h_2, \dots, h_n\} \subseteq P_n$ be commuting independent Hermitian operators. Then by Proposition \ref{prop: properties of check vectors}, the set $\{r(h_1), r(h_2), \dots, r(h_n)\}$ is linearly independent in $\mathbb{F}_2^{2n}$ and satisfies $r(h_i) * r(h_j) = 0$ for all $i,j \leq n$. 

Let $S$ be the stabilizer group generated by $\{h_1, h_2 \dots, h_n\}$. For each $l \in \{1,2, \dots, n \}$, let $S_l$ be the stabilizer group generated by $\{ h_1, h_2, \dots, h_n \} \setminus \{h_l\}$. By Lemma \ref{lem: dimension of L(Z(S))}, $\dim(L(Z(S_l))) = n+1$ for each $l \leq n$. Since $\{h_1, h_2, \dots, h_n\} \subseteq Z(S_l)$, and since $L(Z(S_l))$ is a subspace of $\mathbb{F}_2^{2n}$, there exists a basis of the form $\{r(h_1), r(h_2), \dots, r(h_n), r(g_l)\}$ for $L(Z(S_l))$, where $g_l$ is some Hermitian element of $P_n$. If $g_l h_l = h_l g_l$, then $r(g_l) \in L(Z(S))$ and hence $\{h_1, h_2, \dots, h_n, g_l\}$ is a linearly independent subset of $L(Z(S))$. However this is impossible since $\dim(L(Z(S)))=n$ by Lemma \ref{lem: dimension of L(Z(S))}. Thus $h_l g_l = -g_l h_l$. In this manner we obtain operators $g_1, g_2, \dots, g_n \in P_n$.

It may not be the case that the operators $\{g_1, g_2, \dots, g_n\}$ commute. If they do not commute, we will modify them so that they do as follows. Suppose that $g_1$ does not commute with all of $\{g_2, g_3, \dots, g_n\}$. Then whenever $g_k$ fails to commute with $g_1$, replace $g_k$ with $\hat{g}_k = h_1 g_k$. Then $g_1 \hat{g}_k = g_1 h_1 g_k = -h_1 g_1 g_k = h_1 g_k g_1 = \hat{g}_k g_1$. Furthermore, for each $l \neq k$ we have $\hat{g}_k h_l = h_l \hat{g}_k$ and $\hat{g}_k h_k = -h_k \hat{g}_k$. Letting $\hat{g}_i = g_i$ whenever $g_1$ commutes with $g_i$, we obtain the set $\{g_1, \hat{g}_2, \dots, \hat{g}_n\}$, which remains an independent set. Therefore, we may assume without loss of generality that $g_2, g_3, \dots, g_n$ all commute with $g_1$. Likewise, we may assume without loss of generality that $\{g_3, g_4, \dots, g_n\}$ all commute with $g_2$, $\{g_4, g_5, \dots, g_n\}$ all commute with $g_3$, and so on. Thus we obtain an independent commuting set of operators $\{g_1, g_2, \dots, g_n\}$.

Finally we must show that $\{h_1, h_2, \dots, h_n, g_1, g_2, \dots, g_n\}$ is independent in $P_n$. By part 4 of Proposition \ref{prop: properties of check vectors} it suffices to show that $\{r(h_1), r(h_2), \dots r(h_n), r(g_1), r(g_2), \dots, r(g_n)\}$ is a basis for $\mathbb{F}_2^{2n}$. For this it suffices to show that $\{r(h_1), r(h_2), \dots r(h_n), r(g_1), r(g_2), \dots, r(g_n)\}$ spans $\mathbb{F}_2^{2n}$. To this end, let $\vec{a} \in \mathbb{F}_2^{2n}$. Then $\vec{a} = r(g)$ for some Hermitian $g \in P_n$. If $g$ commutes with all of $\{h_1, h_2, \dots, h_n\}$ then $g \in Z(S)$. But then $r(g) \in L(Z(S))$. Since $\dim(L(Z(S)))=n$ by Lemma \ref{lem: dimension of L(Z(S))} and $\{r(h_1), r(h_2), \dots, r(h_n)\}$ is a basis for $L(Z(S))$, we have $r(g) \in \Span \{r(h_1), r(h_2), \dots, r(h_n)\}$ in this case. Now suppose that $g$ anti-commutes with $\{h_{k_1}, h_{k_2}, \dots, h_{k_l}\}$ for $k_1 < k_2 < \dots < k_l \leq n$ and that $g$ commutes with all other elements of $\{h_1, h_2, \dots, h_n\}$. Let $\vec{b} = \sum_{j=1}^l r(g_{k_j})$. Then $(r(g) + \vec{b})*r(h_i) = 0$ for all $i \leq n$. It follows from Proposition \ref{prop: properties of check vectors} that $r(g) + \vec{b} \in L(Z(S))$. Since $\vec{b} \in \Span \{r(g_1), r(g_2), \dots, r(g_n) \}$ and $L(Z(S))$ is spanned by $\{r(h_1), r(h_2), \dots, r(h_n)\}$, we must conclude that $r(g) = \vec{b} + (r(g) + \vec{b}) \in \Span \{r(h_1), r(h_2), \dots, r(h_n), r(g_1), r(g_2), \dots, r(g_n)\}$. \end{proof}

\begin{lemma} \label{lem: expand basis for stabilizer}
Let $\{h_1, h_2, \dots, h_{n-k}\} \subseteq P_n$ be a set of commuting Hermitian operators independent in $P_n$. Then there exist Hermitian operators $\{h_{n-k+1}, h_{n-k+2}, \dots, h_n\} \subseteq P_n$ such that $\{h_1, h_2, \dots, h_n\}$ is a set of commuting independent operators in $P_n$.
\end{lemma}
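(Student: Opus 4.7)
The plan is to translate to check vectors, extend to a Lagrangian subspace of $\mathbb{F}_2^{2n}$ for the twisted dot product, then lift back to Hermitian Pauli operators. By Proposition \ref{prop: properties of check vectors} parts (2) and (4), the subspace $V_0 := \Span\{r(h_1), \dots, r(h_{n-k})\} \subseteq \mathbb{F}_2^{2n}$ has dimension $n-k$ and is totally isotropic for $*$, since the $h_i$ are independent and pairwise commuting.

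The core step is to extend $V_0$ to an $n$-dimensional isotropic subspace $V$ by a greedy procedure. Given an isotropic $V_j \supseteq V_0$ of dimension $n-k+j < n$, set $V_j^\perp := \{v \in \mathbb{F}_2^{2n} : v * w = 0 \text{ for all } w \in V_j\}$. Since $*$ is non-degenerate on $\mathbb{F}_2^{2n}$ --- its Gram matrix $\begin{pmatrix} 0 & I_n \\ I_n & 0 \end{pmatrix}$ is invertible --- the subspace $V_j^\perp$ has dimension $2n - (n-k+j) = n+k-j$, strictly larger than $\dim V_j$. I pick any $w \in V_j^\perp \setminus V_j$; then $V_{j+1} := V_j + \Span\{w\}$ remains isotropic, because $v * v = \langle \vec{a}, \vec{b}\rangle + \langle \vec{b}, \vec{a}\rangle = 0$ for every $v = \vec{a} \oplus \vec{b}$ in characteristic~2. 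After $k$ iterations I obtain a Lagrangian $V \supseteq V_0$ of dimension $n$ with a basis of the form $\{r(h_1), \dots, r(h_{n-k}), v_{n-k+1}, \dots, v_n\}$.

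Finally, I would lift each new basis vector $v_i$ to a Hermitian Pauli. Writing $v_i = \vec{a}_i \oplus \vec{b}_i$, the Pauli $X(\vec{a}_i)Z(\vec{b}_i)$ has check vector $v_i$, and multiplying by an appropriate power of $i$ produces a Hermitian element $h_i \in P_n$ with $r(h_i) = v_i$ (one verifies that $i^m X(\vec{a})Z(\vec{b})$ is Hermitian precisely when $m \equiv \langle \vec{a}_i, \vec{b}_i\rangle \pmod 2$, so such an $m$ can always be chosen). Pairwise commutativity of $\{h_1, \dots, h_n\}$ then follows from isotropy of $V$ via Proposition \ref{prop: properties of check vectors}(2), and independence in $P_n$ follows from linear independence of the check vectors via part (4). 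The only subtle point is that the greedy extension preserves isotropy, which hinges on the fact that $*$ is alternating in characteristic~2.
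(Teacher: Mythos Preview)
Your proof is correct and follows essentially the same greedy-extension strategy as the paper. The paper phrases each step in terms of stabilizer groups and invokes Lemma~\ref{lem: dimension of L(Z(S))} to obtain $\dim L(Z(S_l)) = n+l > n-l$, whereas you work directly in the symplectic space $(\mathbb{F}_2^{2n}, *)$ and get the same dimension count from non-degeneracy; your explicit remark that $*$ is alternating is exactly what the paper uses implicitly to ensure the enlarged set remains commuting.
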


\begin{proof}
Let $l \in \{1, 2, \dots, k\}$ and assume that $\{h_1, h_2, \dots, h_{n-l}\}$ is an independent set of commuting operators in $P_n$. Let $S_l$ be the stabilizer group generated by $\{h_1, h_2, \dots, h_{n-l}\}$. By Lemma \ref{lem: dimension of L(Z(S))}, $\dim(L(Z(S_l))) = n+l > n$ and hence there exists a non-trivial Hermitian operator $h_{n-l+1} \in Z(S)$ such that $\{r(h_1), r(h_2), \dots, r(h_{n-l}), r(h_{n-l+1})\}$ is linearly independent. Consequently $\{h_1, h_2, \dots, h_{n-l}, h_{n-l+1}\}$ is a set of commuting independent operators in $P_n$ by part 4 of Proposition \ref{prop: properties of check vectors}. It follows that if $\{h_1, h_2, \dots, h_{n-k}\}$ are independent commuting operators generating a stabilizer group $S_k$ then there exists a chain of stabilizer groups $S_k \subset S_{k-1} \subset \dots \subset S_{1} \subset S_0$ with each $S_l$ generated by commuting Hermitian operators $\{h_1, h_2, \dots, h_{n-l}\}$ independent in $P_n$. The claim follows.
\end{proof}

We are now prepared to prove the main theorem.

\begin{theorem} \label{thm: Main theorem}
Let $n \in \mathbb{N}$. Let $S$ be a stabilizer group with $n$ independent generators. Define $\mathcal{E}: M_{2^n} \to M_{2^n}$ via $\mathcal{E}(x) = \sum_{h \in S} \lambda_h hxh$ where $\lambda_h > 0$ for each $h \in S$ and $\sum_{h \in S} \lambda_h = 1$. Then $G_{\mathcal{E}}$ has no non-trivial quantum cliques or quantum anti-cliques which are stabilizer codes.
\end{theorem}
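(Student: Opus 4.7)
The plan is to exploit a symplectic duality on $\mathbb{F}_2^{2n}$: for the channel in the theorem, the subspace $L(W_\phi)$ is a Lagrangian (maximal isotropic) subspace with respect to the twisted dot product, and this forces the combinatorial condition characterizing quantum cliques and that characterizing quantum anticliques to become dual, hence equivalent, statements. Since a non-trivial stabilizer code cannot simultaneously be a clique and an anticlique, it can be neither.

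The first step is to compute $W_\phi$ and show that $L(W_\phi) = L(S)$ is Lagrangian. Every element of a stabilizer group is Hermitian and $S$ is closed under multiplication, so $W_\phi = \{h^\dagger g : h,g \in S\} = S$, and hence $L(W_\phi) = L(S)$. By part 2 of Proposition \ref{prop: properties of check vectors}, the set $L(Z(S))$ is exactly the $*$-perpendicular complement $L(S)^\perp$ in $\mathbb{F}_2^{2n}$; combined with Lemma \ref{lem: dimension of L(Z(S))} applied with $k=0$, which gives $\dim L(Z(S)) = n = \dim L(S)$, this yields $L(S)^\perp = L(S)$.

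Next, I would let $S'$ be any stabilizer group with $n-k$ independent generators, where $0 < k < n$, and set $P = P(C(S'))$. Theorem \ref{thm: clique characterization} says $C(S')$ is a quantum clique for $G_\phi$ iff $L(Z(S')) \subseteq L(S) + L(S')$ (since $\pi(L(W_\phi)) = (L(S)+L(S'))/L(S')$), while Theorem \ref{thm: dimension of PSP} combined with $L(S') \subseteq L(Z(S'))$ shows that $C(S')$ is a quantum anticlique iff $L(S) \cap L(Z(S')) \subseteq L(S')$. Since $L(Z(S')) = L(S')^\perp$ and $L(S) = L(S)^\perp$, the standard identity $(A \cap B)^\perp = A^\perp + B^\perp$ for the non-degenerate symplectic form $*$ on $\mathbb{F}_2^{2n}$ converts the anticlique condition $L(S) \cap L(S')^\perp \subseteq L(S')$ into $L(S) + L(S') \supseteq L(S')^\perp = L(Z(S'))$, which is exactly the clique condition; the argument reverses.

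For $0 < k < n$ the clique condition forces $\dim(PG_\phi P) = 2^{2k} \geq 4$ while the anticlique condition forces $\dim(PG_\phi P) = 1$, so the two cannot hold simultaneously. Their equivalence therefore forces both to fail, proving the theorem. The main obstacle is keeping the symplectic-duality bookkeeping watertight: identifying centralizers with $*$-perpendiculars, verifying that $L(S)$ is Lagrangian, and checking that taking perpendiculars genuinely swaps the clique and anticlique conditions rather than producing only a one-way implication.
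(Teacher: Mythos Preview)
Your argument is correct and genuinely different from the paper's. The paper handles the anticlique and clique halves separately and constructively: for the anticlique part it performs a case analysis on $V = L(S) \cap L(R)$ to locate an explicit $h \in S$ with $r(h) \in L(Z(R)) \setminus L(R)$; for the clique part it invokes Lemmas \ref{lemma: basis for f_2^2n} and \ref{lem: expand basis for stabilizer} to build a symplectic basis adapted to $R$ and then exhibits two cosets of $L(Z(R))/L(R)$ whose representatives have twisted product $1$, hence cannot both lie in $\pi(L(S))$. Your route instead recognises that $L(S)$ is Lagrangian for the symplectic form $*$ (this is exactly the content of Lemma \ref{lem: dimension of L(Z(S))} at $k=0$), so that taking $*$-perps swaps the inclusion $L(S) \cap L(Z(S')) \subseteq L(S')$ (anticlique) with $L(Z(S')) \subseteq L(S) + L(S')$ (clique); since the two are equivalent but mutually exclusive for $0<k<n$, both fail. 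This is shorter and more conceptual, and it explains \emph{why} the example works rather than merely verifying it. The paper's approach, on the other hand, develops the symplectic-basis machinery of Lemma \ref{lemma: basis for f_2^2n}, which it then reuses in the proof of Theorem \ref{thm: only one example}; your duality argument does not directly supply that tool.
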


Before giving the proof we remark that it is easy to find examples of channels like the one described in the theorem. For instance, take $S$ to be the stabilizer group in $P_n$ with generators $\{X \otimes I \otimes \dots \otimes I, I \otimes X \otimes I \otimes \dots \otimes I, \dots, I \otimes I \otimes \dots \otimes X \}$.

\begin{proof}
Let $R$ be a stabilizer group with $n-k$ generators, where $k < n$. We will show that $C(R)$ is neither a clique nor an anti-clique for $G_{\mathcal{E}}$.

We first prove that $C(R)$ is not a quantum anti-clique for $G_{\mathcal{E}}$. To do this, we will show that \[ |\pi(L(S)) \cap L(Z(R)) / L(R)| > 1 \] where $\pi: \mathbb{F}_2^{2n} \to \mathbb{F}_2^{2n} / L(R)$ is the quotient map. The claim will follow by Theorem \ref{thm: dimension of PSP}. To do this, it suffices find a non-trivial $h \in S$ such that $r(h) \in L(Z(R)) \setminus L(R)$.

By Lemma \ref{lem: dimension of L(Z(S))} we see that $\dim(L(Z(R))) = n + k$. Since $\dim(L(S) \cap L(Z(R))) = \dim(L(S)) + \dim(L(Z(R))) - \dim(L(S) + L(Z(R))) \geq k$, we conclude that there exist $h_1, h_2, \dots, h_l \in S$ with $l \geq k$ such that $\{r(h_1), r(h_2), \dots, r(h_l)\} \subseteq L(S) \cap L(Z(R))$ is linearly independent. If $V := \Span \{r(h_1), r(h_2), \dots, r(h_l)\}$ is not a subspace of $L(R)$ then there exists a non-trivial $h \in S$ such that $r(h) \in V \setminus L(R)$ and hence $r(h) \in L(Z(R)) \cap L(S) \setminus L(R)$. Therefore we must consider the case $V \subseteq L(R)$ and hence $V = L(S) \cap L(R)$. If $V = L(R)$, then because $\dim(L(R)) < \dim(L(S))$ there exists a non-trivial $h \in S$ such that $r(h) \in L(S) \setminus L(R)$. But because $L(R) \subseteq L(S)$ and $r(g) * r(h) = 0$ for all $g \in S$, we must conclude that $r(g) * r(h) = 0$ for all $g \in R$ and hence $r(h) \in L(Z(R)) \cap L(S) \setminus L(R)$. If $V$ is a proper subspace of $L(R)$, then we may choose $\{w_{l+1}, w_{l+2}, \dots, w_{n-k}\} \subseteq R$ such that $\{r(h_1), r(h_2), \dots r(h_l), r(w_{l+1}), r(w_{l+2}), \dots r(w_{n-k})\}$ is a basis for $L(R)$. Let $R'$ be the stabilizer group generated by $\{w_{l+1}, w_{l+2}, \dots, w_{n-k}\}$. Then by Lemma \ref{lem: dimension of L(Z(S))} $\dim(L(Z(R'))) = n + k + l$. Hence $\dim(L(S) \cap L(Z(R'))) \geq k+l$. It follows that there exists $h \in S$ such that $r(h) \notin V$ and $r(h) \in L(S) \cap L(Z(R'))$. Since $L(R) = \Span \{r(h_1), r(h_2), \dots r(h_l), r(w_{l+1}), r(w_{l+2}), \dots r(w_{n-k})\}$ and $V = \Span \{r(h_1), r(h_2), \dots r(h_l)\} \subseteq L(S)$ we see that $r(h) * r(g) =  0$ for all $g \in R$. We deduce that $r(h) \in L(Z(R)) \cap L(S) \setminus L(R)$. Therefore we conclude that $C(R)$ is not a quantum anti-clique for $G_{\mathcal{E}}$.

Finally we must show that $C(R)$ is not a clique. Let $\{w_1, w_2, \dots, w_{n-k}\}$ be an independent set of generators for $R$. Then there exist Hermitian operators $\{w_{n-k+1}, \dots, w_n\}$ such that $\{w_1, w_2, \dots, w_n\}$ is an independent set of commuting operators in $P_n$ by Lemma \ref{lem: expand basis for stabilizer}. By Lemma \ref{lemma: basis for f_2^2n}, we can find Hermitian operators $\{v_1, v_2, \dots, v_n\}$ such that $w_i v_i = -v_i w_i$ and $w_i v_j = v_j w_i$ for all $i \neq j$ and such that \[ \{r(w_1), r(w_2), \dots, r(w_n), r(v_1), r(v_2), \dots, r(v_n)\} \] is a basis for $\mathbb{F}_2^{2n}$. Since $L(R) = \Span \{r(w_1), r(w_2), \dots, r(w_{n-k})\}$, it is evident that \[ \{ r(w_{n-k+1}) + L(R), \dots, r(w_n)+L(R), r(v_{n-k+1}) + L(R), \dots, r(v_n) + L(R)\} \] is a basis for the quotient vector space $L(Z(R)) / L(R)$. By Theorem \ref{thm: clique characterization}, $C(R)$ is a clique for $G_{\mathcal{E}}$ if and only if $L(Z(R))/L(R) \subseteq L(S)$. However this is impossible. Indeed, suppose that $\vec{a} \in r(w_n) + L(R)$ and $\vec{b} \in r(v_n) + L(R)$. We may assume that $\vec{a} = r(w_n) + r_1$ and $\vec{b} = r(v_n) + r_2$ for some $r_1,r_2 \in L(R)$. Then \begin{eqnarray} \vec{a} * \vec{b} & = & (r(w_n) + r_1) * (r(v_n) + r_2) \nonumber \\ & = & r(w_n) * r(v_n) + r(w_n) * r_2 + r_1 * r(v_n) + r_1 * r_2 \nonumber \\ & = & 1. \nonumber \end{eqnarray} However if $\vec{a}, \vec{b} \in L(S)$ then $\vec{a} * \vec{b} = 0$ since the elements of $S$ commute, by part 2 of Proposition \ref{prop: properties of check vectors}. It follows that $L(Z(R))/L(R)$ is not a subset of $\pi(L(S))$ and hence $C(R)$ is not a quantum clique by Theorem \ref{thm: clique characterization}.
\end{proof}

We conclude by showing that the quantum graphs considered in Theorem \ref{thm: Main theorem} are the only quantum graphs for Pauli channels lacking non-trivial quantum cliques or anti-cliques from the set of stabilizer codes.

\begin{theorem} \label{thm: only one example}
Let $n \in \mathbb{N}$. Suppose that $\mathcal{F}:M_{2^n} \to M_{2^n}$ is a Pauli channel. Then one of the following hold. \begin{enumerate}
    \item There exists a stabilizer group $R$ such that $C(R)$ is a non-trivial quantum anti-clique for $G_{\mathcal{F}}$.
    \item There exists a stabilizer group $R$ such that $C(R)$ is a non-trivial quantum clique for $G_{\mathcal{F}}$.
    \item There exists a Pauli channel $\mathcal{E}$ of the form described in Theorem \ref{thm: Main theorem} such that $G_{\mathcal{F}} = G_{\mathcal{E}}$.
\end{enumerate}
\end{theorem}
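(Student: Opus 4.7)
The plan is to prove the contrapositive: assuming both (1) and (2) fail, I will deduce (3). Because distinct check vectors correspond to trace-orthogonal (hence linearly independent) Pauli operators, two Pauli channels share the same quantum graph exactly when their check-vector sets $L(W)$ coincide. Consequently, (3) is equivalent to exhibiting $L(W_\psi) = L(S)$ for some stabilizer group $S$ with $n$ independent generators --- equivalently, showing that $L(W_\psi)$ is an $n$-dimensional isotropic (Lagrangian) subspace of $\mathbb{F}_2^{2n}$ under the twisted product $*$. Given such a subspace, choosing Hermitian preimages of any basis via Lemma \ref{lemma: basis for f_2^2n} produces the desired $S$, and the Pauli channel $\phi$ from Theorem \ref{thm: Main theorem} built from $S$ then satisfies $G_\phi = G_\psi$.

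The first step uses failure of (1) to build a Lagrangian basis inside $L(W_\psi)$ by induction. There exists $v_1 \in L(W_\psi) \setminus \{0\}$, since $L(W_\psi) = \{0\}$ would make every non-trivial stabilizer code an anticlique. Having chosen linearly independent and pairwise twisted-orthogonal $v_1, \ldots, v_k \in L(W_\psi)$ with $k < n$, let $h_i$ be Hermitian preimages; they generate a non-trivial stabilizer $R_k$, so failure of (1) together with Theorem \ref{thm: dimension of PSP} yields $|\pi(L(W_\psi)) \cap L(Z(R_k))/L(R_k)| \geq 2$, and hence some $v_{k+1} \in L(W_\psi) \cap L(Z(R_k)) \setminus L(R_k)$ extending the chain. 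After $n$ stages, $L := \Span\{v_1, \ldots, v_n\}$ is an $n$-dimensional Lagrangian subspace whose basis lies in $L(W_\psi)$.

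The second step uses failure of (2) to show that $L(W_\psi)$ equals $L$ as a set. If some $w \in L(W_\psi) \setminus L$ existed, then since $L$ is maximal isotropic we would have $w * v_j = 1$ for some $j$, and a careful choice of stabilizer $R$ with few enough generators --- so that $L(Z(R))/L(R)$ has small dimension and its cosets can be enumerated explicitly using $v_1, \ldots, v_n$ together with $w$ and its translates --- would exhibit all cosets as being hit by elements of $\pi(L(W_\psi))$, yielding a non-trivial clique and contradicting failure of (2). Symmetrically, any hypothesized $v \in L \setminus L(W_\psi)$ would yield a non-trivial clique through an analogous construction with a carefully selected $R$.

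The main technical obstacle lies precisely in this second step: translating the abstract statement that some coset of $L(R)$ in $L(Z(R))$ is missing from $\pi(L(W_\psi))$ for each non-trivial $R$ into the concrete structural conclusion $L(W_\psi) = L$ will require a judicious choice of auxiliary stabilizer $R$ and, where necessary, a secondary application of failure of (1) to produce further elements of $L(W_\psi)$ sufficient to certify that every coset of $L(R)$ in $L(Z(R))$ is represented.
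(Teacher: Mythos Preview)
Your reduction of (3) to the condition that $L(W_\psi)$ equal an $n$-dimensional Lagrangian subspace of $(\mathbb{F}_2^{2n}, *)$ is correct, and Step~1 is sound: failure of (1) applied to each $R_k$ genuinely produces $v_{k+1} \in L(W_\psi) \cap L(Z(R_k)) \setminus L(R_k)$, extending the isotropic family. This is a clean symplectic reorganization of the argument; the paper instead splits immediately on whether the noise operators $\{E_i\}$ pairwise commute, constructing an anticlique in the commuting case (when $G_\psi$ is not already of the special form) and a clique in the non-commuting case.

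The gap is Step~2, and beyond being unfinished it contains a mis-diagnosis. If $v \in L \setminus L(W_\psi)$, the correct contradiction is with failure of \emph{(1)}, not (2): a missing check vector makes the compressed graph smaller, not larger. Concretely, choose a stabilizer $S$ with $L(S) = L$ and generators $h_1, \ldots, h_n$ satisfying $r(h_n) = v$; take the symplectic dual basis $g_1, \ldots, g_n$ from Lemma~\ref{lemma: basis for f_2^2n} and set $R = \langle g_1, \ldots, g_{n-1} \rangle$. Then $L(Z(R)) \cap L = \{0, r(h_n)\}$, whence $L(Z(R)) \cap L(W_\psi) = \{0\}$ and $C(R)$ is a non-trivial anticlique --- this is exactly the paper's construction in its commuting case. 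The other inclusion $L(W_\psi) \subseteq L$ is likewise not a formality: from $w, v_j \in L(W_\psi)$ with $w * v_j = 1$ you can extract a pair of anticommuting noise operators $E_a, E_b$ by expanding the twisted product, but turning that into a clique still requires exhibiting an explicit $R$ with $n-1$ generators and verifying that all four cosets of $L(Z(R))/L(R)$ are represented in $\pi(L(W_\psi))$. Your ``careful choice of $R$'' is precisely where the remaining work lies, and it is not addressed.
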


\begin{proof}
Suppose that $G_{\mathcal{F}} \neq G_{\mathcal{E}}$ for any $\mathcal{E}$ of the form described in Theorem \ref{thm: Main theorem}. Since $\mathcal{F}$ is a Pauli channel, there exist $\{E_1, E_2, \dots, E_m\} \subseteq P_n$ such that $\mathcal{F}(x) = \sum_{i=1}^m \lambda_i E_i x E_i^\dagger$. Without loss of generality we may assume all of $\{E_1, E_2, \dots, E_m\}$ are all Hermitian. Indeed, if $E_k$ is not Hermitian we can replace it with the Hermitian operator $F_k := iE_k$ since $E_k x E_k^\dagger = F_k x F_k$.

Let us first assume that the operators $\{E_1, E_2, \dots, E_m\}$ commute. Without loss of generality, we may assume that $-I$ is not in the subgroup $S$ generated by $\{E_1, E_2, \dots, E_m\}$ in $P_n$. Indeed, let $\{h_1, h_2, \dots, h_l\} \subseteq \{E_1, E_2, \dots, E_m\}$ be independent operators which generate $S$. If $-I = h_1^{\alpha_1} h_2^{\alpha_2} \dots h_l^{\alpha_l}$ for some $\alpha_1, \alpha_2, \dots, \alpha_l \in \mathbb{F}_2$, then for any $\alpha_k \neq 0$ we have $-h_k = h_k h_1^{\alpha_1} \dots h_l^{\alpha_l}$. But we can replace $h_k$ with $F_k := -h_k$ without changing the map $\mathcal{F}$ since $h_k x h_k = F_k x F_k$. Thus we may assume that the group $S$ generated by $\{E_1, E_2, \dots, E_m\}$ is a stabilizer group. By Lemma \ref{lem: expand basis for stabilizer} we can find $\{h_{l+1}, \dots, h_n\}$ such that $\{h_1, h_2, \dots, h_n\}$ is an independent set of commuting Hermitian operators. Let $S'$ be the group generated by these $n$ generators. Since $G_{\mathcal{F}}$ cannot equal $G_{\mathcal{E}}$ where $\mathcal{E}(x) = \sum_{h \in S'} hxh$, we conclude that $G_{\mathcal{F}}$ is a proper subspace of $G_{\mathcal{E}}$. In particular, $L(W_{\mathcal{F}})$ is a proper subset of $L(S')$ where $W_{\mathcal{F}} = \{E_i^\dagger E_j\}$. Without loss of generality, we may assume that $L(W_{\mathcal{F}}) \subseteq L(S') \setminus \{h_n\}$. Let $g_1, g_2, \dots, g_n$ be Hermitian operators with the properties described in Lemma \ref{lemma: basis for f_2^2n}. Let $R$ be the stabilizer group generated by $\{g_1, g_2, \dots, g_{n-1}\}$. Then $R$ satisfies $L(Z(R)) \cap (L(S') \setminus \{h_n\}) = \{\vec{0}\}=\{r(I)\}$. Indeed, every element of $S'$ fails to commute with at least one of the operators $\{g_1, g_2, \dots, g_{n-1}\}$ except for $h_n$ and $I$. It follows that $L(Z(R)) \cap L(W_{\mathcal{F}}) = \{r(I)\}$. So $C(R)$ is a non-trivial quantum anti-clique for $G_{\mathcal{F}}$ by Theorem \ref{thm: dimension of PSP}.

Finally assume that $\{E_1, E_2, \dots, E_m\}$ do not all commute. Hence we may assume that $h=E_i$ and $g=E_j$ do not commute for some $i \neq j$. By Lemma \ref{lem: expand basis for stabilizer} there exist Hermitian $\{h_2, h_3, \dots, h_n\}$ such that $\{h, h_2, \dots, h_n\}$ is a set of independent commuting operators in $P_n$. Moreover, we may assume without loss of generality that $h_i g = g h_i$ for each $i > 1$. Indeed, if $h_i g = -g h_i$ for some $i$, we can replace $h_i$ with $\hat{h}_i := h_i h$ to get $\hat{h}_i g = g \hat{h}_i$ without affecting the independence of the set $\{h, h_2, \dots, h_n\}$. Let $R$ be the stabilizer group generated by $\{h_2, h_3, \dots, h_n\}$. Then $\{r(I), r(g), r(h), r(g) + r(h)\} \subseteq L(W_{\mathcal{F}}) \cap L(Z(R))$. Moreover, $r(I), r(h), r(g)$ and $r(h) + r(g)$ belong to different cosets of $L(Z(R))/L(R)$. Indeed, $r(h), r(g), r(g) + r(h) \notin L(R)$ whereas $r(I) \in L(R)$. Since $r(g) + r(h) \notin L(R)$, $r(g)$ and $r(h)$ belong to different cosets. Since $r(g) + (r(h) + r(g)) = r(h) \notin L(R)$, $r(g)$ and $r(h) + r(g)$ belong to different cosets, and similarly $r(h)$ and $r(h) + r(g)$ belong to different cosets. Therefore $|\pi(L(W_{\mathcal{F}})) \cap \pi(L(Z(R)))| \geq 4$. But $|\pi(L(Z(R)))| = 4$ since $|\pi(L(Z(R)))| = 2^{\dim(L(Z(R))-\dim(L(R))}$, $\dim(L(Z(R)))=n+1$ by Lemma \ref{lem: dimension of L(Z(S))} and $\dim(L(R))=n-1$. We conclude that $C(R)$ is a non-trivial quantum clique for $G_{\mathcal{F}}$ by Theorem \ref{thm: clique characterization}.
\end{proof}

\section*{Acknowledgments}

This paper builds upon the Master's thesis of the first author \cite{BousbaThesis} which was completed under the supervision of the second author. We thank Professor Guy Chasse of Les \'{e}coles de Saint-Cyr Co\"{e}tquidan and Professor Tina Hartley of the United States Military Academy for arranging for the first author's visit to the United States Military Academy where this research took place. We also thank the editor and referees for their careful reading and comments which improved the exposition of this work.

\bibliographystyle{ieeetr}
\bibliography{references}
\end{document}